\newtheorem{theorem}{Theorem}[section]
\newtheorem{lemma}[theorem]{Lemma}
\DeclareMathOperator{\Ov}{Ov}
\DeclareMathOperator{\MaxOv}{MaxOv}
\DeclareMathOperator{\MinOv}{MinOv}
\newcommand{\chull}[1]{\langle #1 \rangle}
\begin{document}
\title[Creating and controlling overlap]{Creating and controlling overlap in two-layer networks. Application to a mean-field SIS epidemic model with awareness dissemination.}
\author[D. Juher]{David Juher}
\address{Dept. IMAE,
Universitat de Girona, Catalonia}
\email{david.juher@udg.edu}
\author[J. Salda\~na]{Joan Salda\~na}
\address{Dept. IMAE,
Universitat de Girona, Catalonia}
\email{joan.saldana@udg.edu}

\begin{abstract}
We study the properties of the potential overlap between two networks $A,B$ sharing the same set of $N$ nodes (a two-layer network) whose respective degree distributions $p_A(k), p_B(k)$ are given. Defining the overlap coefficient $\alpha$ as the Jaccard index, we prove that $\alpha$ is very close to 0 when $A$ and $B$ have been independently generated via the configuration model algorithm. We also derive an upper bound $\alpha_M$ for the maximum overlap coefficient permitted in terms of $p_A(k)$, $p_B(k)$ and $N$. Then we present an algorithm based on cross-rewiring of links to obtain a two-layer network with any prescribed $\alpha$ inside the range $(0,\alpha_M)$. Finally, to illustrate the importance of the overlap for the dynamics of interacting contagious processes, we derive a mean-field model for the spread of an SIS epidemic with awareness against infection over a two-layer network, containing $\alpha$ as a parameter. A simple analytical relationship between $\alpha$ and the basic reproduction number follows. Stochastic simulations are presented to assess the accuracy of the upper bound $\alpha_M$ and the predictions of the mean-field epidemic model.
\end{abstract}

\maketitle

\section{Introduction}
\label{s1}

Some contagious processes interact with each other during their propagation, which can occur either through the same route of transmission or through routes that share the same set of nodes but use different types of connections. In the second case, the description of the spread uses the concept of multilayer or multiplex network, namely, a set of nodes (individuals, computers, etc.) connected by qualitatively different types of links corresponding to possible relationships among them (acquaintanceship, friendship, physical contact, social networks, etc), each layer defined by a type of connection. Competitive viruses spreading simultaneously through different routes of transmission over the same host population, or the spread of a pathogen and awareness during an epidemic episode are examples of processes that are better described by means of multilayer networks \cite{Sahneh-Scoglio}.

In the last years it has been a development of the mathematical formulation of multiplex networks and, also, of more general interconnected networks for which the set of nodes does not need to be the same at each layer \cite{DD,DS,Saumell}. Moreover, recent results show the importance of the interrelation between different layers in determining the fate of competitive epidemic processes \cite{Funk10,Sahneh-Scoglio}. In other cases, however, the  importance of such an interrelation is not so evident from the analytical results of the epidemic threshold  \cite{Granell13,Granell14}, or even seems to be not relevant at all \cite{Wei}.

Only a few papers dealing with competing epidemics over multilayer networks focus on the impact of layer overlap on the epidemic dynamics \cite{Funk09,Funk10,Marceau}. In \cite{Funk10}, the authors consider a sequential propagation of two epidemics using distinct routes of transmission over a network consisting of two partly overlapped layers. Using bond percolation, it is determined the success of a second epidemic through that part of its route of transmission whose nodes have not been infected by the first epidemic. In \cite{Marceau}, the authors develop an analytical approach to deal with simultaneous spread of two interacting viral agents on two-layered networks. In this work, moreover, the respective effects of overlap and correlation of the degrees of nodes in each layer on the epidemic dynamics are considered.

Here the \emph{overlap} $\alpha$ between two (labeled) networks $A$ and $B$ of $N$ nodes is defined as the fraction of links of the union network that are common links of $A$ and $B$ or, equivalently, the probability that a randomly chosen link of the network $A\cup B$ is simultaneously a link of both $A$ and $B$. In fact $\alpha$ is the \emph{Jaccard index} as defined in \cite{jac}. Just to illustrate that this simple statistical parameter can play a critical role in the qualitative response of a two-layer network model, in Sect.~\ref{s2} we present a mean-field model for the spread of two contagious process interacting each other, namely, the spread of an infectious agent and the raising in awareness of preventive behaviours. As an interesting feature, the overlap coefficient between the networks embedding the respective routes of transmission is a parameter of the model. This allows us to derive a simple relationship between $\alpha$ and the epidemic threshold. Provided that one wants to perform simulations to validate this (or any) model, a systematic procedure to generate couples of networks of given size and degree distributions with a prescribed value of $\alpha$ would be a useful tool.

However, the following natural question arises: \textit{Given respective degree distributions $p_A(k)$ and $p_B(k)$ for each network layer, which is the range of attainable overlap coefficients between them?} In previous papers dealing with this issue \cite{Funk10,Marceau}, a joint degree distribution $\rho(k_1, k_2, k_3)$ is considered to generate a two-layer network with arbitrary overlap by decomposing it into three non-overlapped networks. The third marginal degree distribution is the one for the overlapped part of the two layers, whereas the other two correspond to the non-overlapped parts of each layer. Therefore, the probability that a randomly selected node has degree $k_1$ on the first layer and degree $k_2$ on the second one is given by the joint degree distribution $P(k_1,k_2)$ obtained from $\rho$ as $P(k_1, k_2)  = \sum_{k_3} \rho(k_1 - k_3, k_2 - k_3, k_3)$. In other words, the overlap between both layers is prescribed before hand by $\rho(k_1,k_2,k_3)$. In contrast, our approach is based on the study of the potential overlap between two networks whose (finite, empirical) degree distributions are previously fixed. More precisely, in Sect.~\ref{s3} and \ref{s4} we estimate the minimum and maximum values (call them $\alpha_m$ and $\alpha_M$) for the overlap coefficient between two networks of size $N$ and degree distributions $p_A(k)$ and $p_B(k)$. In Sect.~\ref{s5} we present an algorithm that takes as input $N$, $p_A(k)$, $p_B(k)$ and $\alpha\in(\alpha_m,\alpha_M)$, and generates a couple of networks of $N$ nodes, with respective degree distributions $p_A(k)$ and $p_B(k)$ and overlap coefficient close to $\alpha$. So we are given a tool to test the analytical predictions relating overlap and epidemic thresholds. Finally, in Sect.~\ref{s6} we assess the accuracy of the predictions of the mean-field formulation by comparing them to stochastic simulations of the contagious processes over complex random networks.

\section{Motivation of the problem: a mean-field SIS epidemic model defined on a two-layer network}\label{s2}
We start this section by fixing some terminology. All along this paper, the nodes of any network will be labeled with the natural numbers $\{1,2,\ldots,N\}$.
The cardinality of a finite set $X$ will be denoted by $|X|$. Let $V=\{1,2,\ldots,N\}$ for some $N\in\mathbb{N}$. Let $E$ and $E'$ be two subsets
of $\{\{i,j\}:i\ne j\mbox{ and }i,j\in V\}$. Let $G$ and $G'$ be the undirected networks having $V$ as the set of nodes and $E$ and $E'$ as the respective sets of links. The \emph{union network} $G\cup G'$ is the undirected network whose sets of nodes and links are $V$ and $E\cup E'$ respectively. By definition, we will say that $G$
and $G'$ are \emph{different} from each other if and only if $E\ne E'$. In particular, if we have a network $H$ and we simply permute the labels of the nodes of
$H$, then we obtain a network that is in general different from (but isomorphic to) $H$. Observe that the union operation is not a topological invariant: the union of two networks does not depend only on their shapes but also on the way their nodes are labeled.
The \emph{overlap} between $G$ and $G'$ is defined as the fraction
\[ \Ov(G,G'):=\frac{|E\cap E'|}{|E\cup E'|}=\frac{|E\cap E'|}{|E|+|E'|-|E\cap E'|},  \]
which can be thought as the probability that a randomly chosen link of $G\cup G'$ is simultaneously a link of both $G$ and $G'$.

A \emph{degree set} of cardinality $N$ is a multiset (i.e. multiple instances of each element are allowed) of $N$ integers that is realizable as the set of degrees of a network.
That is, there exist a labeling $\{k_1,k_2,\ldots,k_N\}$ of the elements of the set and a network $G$ of $N$ nodes such that $k_i$ is the degree of the node $i$. The ordered list
$(k_1,k_2,\ldots,k_N)$ will be called the \emph{degree sequence} of $G$. A probability distribution $p(k)$ with bounded support will be called \emph{empirical (of $N$ nodes)} if it is
realizable as the degree distribution of a network of $N$ nodes. That is, there exists a network $G$ of $N$ nodes such that:
\begin{enumerate}
\item[(S1)] The degree set $\{k_1,k_2,\ldots,k_N\}$ of $G$ satisfies the well-known Havel-Hakimi condition \cite{havel,hakimi}
\item[(S2)] $N_k:=|\{i:k_i=k\}|=p(k)N$
\item[(S3)] $\sum k_i=:2L$ is even
\item[(S4)] If $\chull{k}$ denotes the expected degree of a node, then $\chull{k}N=2L$.
\end{enumerate}
We use the term \emph{empirical} for a degree distribution to distinguish it from a (theoretical, not necessarily with bounded support) probability distribution $p(k)$. In this case, for
any $N\in\mathbb{N}$, one can use several standard algorithms (see Sect.~\ref{s3}) to construct a network $G_N$ of $N$ nodes whose empirical degree distribution $p_N(k)$ is close to $p(k)$, in the sense that,
for big enough values of $N$, $p_N(k)$ converges in probability to $p(k)$ (\cite{Britton}, Theorem 2.1).

\subsection{The model}
Epidemic models describe the spread of infectious diseases on populations whose individuals are classified into distinct classes according to their infection state as, for instance, the class of susceptible (S) individuals and the class of infectious (I) ones. A closer look at the physical transmission of an infection reveals that a suitable description of populations must take into account the network $A$ of physical contacts among individuals, with nodes representing individuals and links corresponding to physical contacts along which disease can propagate. On the other hand, if one assumes that the probability of getting infected through an infectious contact S-I depends on the awareness state of the susceptible individual, then a second network $B$ over which information about the infection state of individuals circulates can be considered. This dissemination network shares the same set of nodes with the one of physical contacts, but has a different set of links representing, for instance, relationships with friends and acquaintances. So, if a pair of individuals, one susceptible and the other infectious, are connected to each other on both networks, one can assume that the transmission rate $\beta_c$ (here $c$ stands for \emph{common}) will be smaller than the normal transmission rate $\beta$. This is because susceptible individuals have information about the health state of their infected partners and react by adopting preventive measures to diminish the risk of contagion.

According to this scenario, next we derive a mean-field susceptible-infectious-susceptible (SIS) epidemic model which implicitly assumes spreading of both information and an infectious agent over a two-layer network. Following the standard approach for sexually transmitted diseases (STDs) where the heterogeneity in the number of contacts (sexual partners) is a basic ingredient \cite{Anderson}, individuals are classified according to their infection state and  their number of physical contacts. So, the model will take into account the network layer $A$ of physical contacts in terms of its degree distribution $p_A(k)=N_k/N$ where $N_k$ is the number of individuals having degree $k$. Analogously, the dissemination network (network layer $B$) is described by its degree distribution $p_B(k)$. A key assumption in the model derivation is the existence of a \textit{partial} and \textit{uniform} overlap between the links of each layer, which means that the probability of finding two nodes connected to each other in both networks does not depend on the degrees of the pair. For sake of brevity, a pair of such nodes is said to share a \textit{common link}, although the natures of the connections are dissimilar.

Within each layer, it is assumed that there is no degree-degree correlation, i.e., neighbours in each layer are randomly sampled from the population according to the so-called proportionate mixing of individuals \cite{Diekmann}. This means that, in each layer, the probability $P(k'|k)$ that a node of degree $k$ is connected to a node of degree $k'$ is independent of the degree $k$ and it is given by the fraction of links pointing to nodes of degree $k'$, i.e., $P(k'|k) = k'p(k') /\chull{k}$ \cite{Diekmann}. Therefore, the expected degree of a node reached by following a randomly chosen link, i.e., the expected degree of a neighbour in a population with proportionate mixing is $\chull{k^2}/\chull{k}$. On the other hand, let $I_k$ be the number of infectious nodes of degree $k$ in network layer $A$. Although the links are unordered pairs of connected nodes by definition, let us consider that every link $\{u,v\}$ gives rise to two \emph{oriented links} $u\rightarrow v$ and $v\rightarrow u$. Then, the probability that a randomly chosen oriented link of $A$ leads to an infectious node is given by the fraction of oriented links in $A$ pointing to infectious nodes, that is,
$$
\Theta_I = \frac{1}{\chull{k_A}N}  \sum_k k\,I_k = \frac{1}{\chull{k_A}}  \sum_k k\,i_k
$$
where $\chull{k_A}$ is the average degree in $A$, and $i_k := I_k/N$ is the fraction of nodes that are both infectious and of degree $k$ in $A$.

Finally, let $L_A$, $L_B$, and $L_{A \cap B}$ denote the number of links of $A$, $B$, and common links, respectively. Let $p_{B|A}$ be the probability that a randomly chosen link of $A$, an $A$-link, connects two nodes that are also connected in $B$, that is, $p_{B|A}=\frac{L_{A \cap B}}{L_A}$. Similarly, $p_{A|B}=\frac{L_{A \cap B}}{L_B}$ is the probability that a randomly chosen $B$-link is a common link to both networks. With all these quantities, the epidemic spreading is described in terms of the following differential equation for the number of infectious nodes of degree $k$ in layer $A$:
\begin{equation} \label{SIS-HMFM}
\displaystyle
\frac{dI_k}{dt} = k (1-p_{B|A}) \beta \, S_k \, \Theta_I + k \, p_{B|A} \, \beta_c S_k \, \Theta_I - \mu I_k
\end{equation}
with $S_k=N_k-I_k$ being the number of susceptible nodes of degree $k$ in layer $A$. Here $\beta$ is the transmission rate through a non-common infectious link, and $\beta_c$ is the transmission rate through a common infectious link.

The first term in the rhs of \eqref{SIS-HMFM} is the rate of creation of new infectious nodes of degree $k$ in $A$ due to transmissions of the infection through links that only belong to layer A, whereas the second term is the rate of creation of new infectious nodes from transmissions across common links. The last term accounts for the recoveries of infectious nodes, which occur at a recovery rate $\mu$. Here $\chull{k_A}p_{B|A}$ is the expected number of common oriented links. Therefore, since this number is the same regardless the network we use to compute it, the following consistency relationship must follow:
\begin{equation}
\chull{k_A}p_{B|A} = \chull{k_B}p_{A|B}.
\label{consistency}
\end{equation}

Now let us express $p_{B|A}$ and $p_{A|B}$ in terms of the overlap $\alpha:=\Ov(A,B)$, which is defined as $\alpha=\frac{L_{A \cap B}}{L_{A \cup B}}$ where $L_{A \cup B}$ is the set of links of the union network $A \cup B$. Using (S4), $p_{B|A}$ can be expressed in terms of $\alpha$ as follows:
$$
p_{B|A} = \frac{L_{A \cap B}}{L_A} = \frac{L_{A \cap B}}{L_{A \cup B}} \, \frac{L_{A \cup B}}{L_A} = \alpha \, \frac{L_A + L_B - L_{A \cap B}}{L_A} = \alpha \, \left( 1+ \frac{\chull{k_B}}{\chull{k_A}} - p_{B|A} \right).
$$
From this simple relationship it immediately follows that
\begin{equation} \label{pB|A}
p_{B|A} = \left(1+\frac{\chull{k_B}}{\chull{k_A}}\right) \frac{\alpha}{1+\alpha}.
\end{equation}
Similarly, we also have that $\displaystyle p_{A|B} = \left( 1 + \frac{\chull{k_A}}{\chull{k_B}} \right) \frac{\alpha}{1+\alpha}$. Note that, as expected, $p_{B|A}$ and $p_{A|B}$ fulfil relationship \eqref{consistency}.

Introducing \eqref{pB|A} into Eq.~\eqref{SIS-HMFM}, the overlap appears as a new parameter of the model which now, in terms of the fraction $i_k$ of nodes that are both infectious and of degree $k$, reads
\begin{equation}
\label{SIS-HMFM2}
\frac{di_k}{dt} = \frac{k}{1+\alpha}  \left( \beta \left(1 - \frac{\chull{k_B}}{\chull{k_A}} \alpha \right) + \beta_c \left( 1 +  \frac{\chull{k_B}}{\chull{k_A}} \right)\alpha \right) (p_A(k) - i_k) \, \Theta_I - \mu i_k.
\end{equation}
This equation corresponds to the standard SIS model for heterogeneous populations with proportionate mixing, but with an averaged transmission rate which depends on $\alpha$.

Simple facts about this equation are:

\begin{enumerate}
\item By Lemma~\ref{uppmax}, an upper bound for the maximum overlap coefficient is given by $\min\{\chull{k_A},\chull{k_B}\}/\max\{\chull{k_A},\chull{k_B}\}$. Since the factor $\alpha/(1+\alpha)$ in \eqref{pB|A} is increasing in $\alpha$, when $\chull{k_A}\le\chull{k_B}$ we get $p_{B|A}\le1$ while, for $\chull{k_A} > \chull{k_B}$, we get $p_{B|A}\le\chull{k_B}/\chull{k_A} < 1$.
\item If $\beta_c=\beta$ or $\alpha=0$, Eq.~\eqref{SIS-HMFM2} reduces to the classic SIS-model, as expected, because information dissemination plays no role in the infection spread. If $\alpha=1$, we actually have one network and again Eq.~\eqref{SIS-HMFM2} reduces to the SIS-model but now with $\beta$ replaced by $\beta_c$.
\end{enumerate}

To determine the impact of the network overlap on the initial epidemic growth, we  linearise \eqref{SIS-HMFM2} about the disease-free equilibrium $i^*_k = 0 \, \forall k$ and obtain that the elements of the Jacobian matrix $J^*$ evaluated at this equilibrium are
$$
J^*_{kk'} = \frac{\beta_0(\alpha)}{\chull{k_A}} k k' p_A(k) - \mu \delta_{kk'}
$$
where $\beta_0(\alpha):=\left( \beta \left(1 - \frac{\chull{k_B}}{\chull{k_A}} \alpha \right) + \beta_c \left( 1 +  \frac{\chull{k_B}}{\chull{k_A}} \right)\alpha \right) /\, (1+\alpha)$ and $\delta_{kk'}$ is the Kronecker delta. Since the dominant eigenvalue of the matrix $(kk'p_A(k))$ is equal to $\chull{k^2_A}=\sum_k k^2 p_A(k)$ (with an associated eigenvector whose components $v_k$ are proportional to $kp_A(k)$), it follows that the dominant eigenvalue of $J^*$ is
$$
\lambda_1=\frac{\chull{k^2_A}}{\chull{k_A}} \beta_0(\alpha) - \mu,
$$
which corresponds to the initial growth rate of the epidemic (cf. \cite{Anderson,May}  for $\alpha=0$). From this expression we get that
$\lambda_1$ decreases with $\alpha$ when $\beta_c < \beta$.

We can also measure the initial epidemic growth in terms of the basic reproduction number $R_0$, i.e., the average number of secondary infections caused by a typical infectious individual at the beginning of an epidemic in a wholly susceptible population \cite{Diekmann}. Interpreting $\beta_0(\alpha)$ as an averaged transmission rate weighted by the overlap coefficient $\alpha$ and recalling that $\chull{k^2_A}/\chull{k_A}$ is the expected degree of a neighbour in a population with proportionate mixing, $R_0$ is given by
$$
R_0 = \frac{\chull{k^2_A}}{\chull{k_A}} \frac{\beta_0(\alpha)}{\mu}
= \frac{\chull{k^2_A}}{\chull{k_A} (1+\alpha) \, \mu} \left(\beta \left(1 - \frac{\chull{k_B}}{\chull{k_A}} \, \alpha \right) + \beta_c \left(1 + \frac{\chull{k_B}}{\chull{k_A}} \right)\alpha \right).
$$
Therefore, as expected from the expression of $\lambda_1$, $R_0$ is a decreasing function of the overlap coefficient between the two layers  as long as $\beta_c < \beta$. Note that this expression of $R_0$ is a straightforward extension of the one obtained in \cite{Anderson} for heterogeneous populations and STDs. Figure~\ref{R0-SIS} shows this relationship when layer $A$ has, for instance, an exponential degree distribution with minimum degree $k_{\min}=10$. For this distribution, $\langle k_A \rangle = 2 k_{\min}$ and $\langle k^2_A \rangle / \langle k_A \rangle =5/2 \cdot k_{\min}$ which amount to the values used in the figure.

\begin{figure}[hf]
\centering
\includegraphics[scale=0.45]{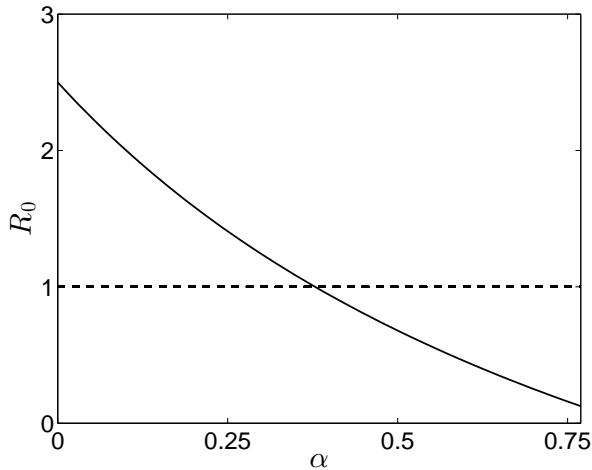}
\caption{Graph of $R_0$ of the SIS model as a function of the overlap coefficient $\alpha$. Parameters: $\mu=1$, $\beta=0.1$, $\beta_c=0.005$, $\langle k_A\rangle = 20$, $\langle k^2_A \rangle = 500$, and $\langle k_B \rangle = 26$. For these mean degrees, $\alpha \in [0, 10/13]$ by Lemma~\ref{uppmax}.
}\label{R0-SIS}
\end{figure}

As usual, it would be desirable to test the accuracy of the model by collating the numerical integration of equations \eqref{SIS-HMFM2} with the output of stochastic simulations. Note that in the derivation of \eqref{SIS-HMFM2} we have assumed the statistical uniformity of several network features. In particular, observe that the entire degree distribution $p_B(k)$ of the dissemination network plays no role in the equations (this is not the case for $p_A(k)$). In fact, the role of layer $B$ is reduced to its mean degree $\chull{k_B}$ via the term $p_{B|A}$. In consequence, it makes sense to perform stochastic simulations with a number of different topologies for $A$ and $B$, in order to evaluate in which situations the mean-field nature of the model fails in giving accurate predictions for the epidemic progression. On the other hand, we are mainly interested in the overlap $\alpha$ as the critical parameter of the model. So, once the empirical degree distributions $p_A(k)$ and $p_B(k)$ are decided, we aim at performing simulations for several values of $\alpha$. Taking it all into account, the following natural questions arise. First, which is the possible range of permitted overlaps between any couple of networks $A,B$ with previously fixed size $N$ and empirical degree distributions $p_A(k)$ and $p_B(k)$? Second, given a value of $\alpha$ inside this range, it is possible to design an algorithm to construct two networks $A$ and $B$ whose degrees are respectively distributed according to $p_A(k)$ and $p_B(k)$ with the prescribed overlap $\alpha$? Both issues are discussed in the following sections.

\section{The expected overlap between two random independent layers}\label{s3}
Assume that we are given two empirical degree distributions $p(k),p'(k)$ of $N$ nodes, with corresponding degree sets $K$ and $K'$.
Let $n$ and $n'$ be the total number of pairwise different networks having respectively $K$ and $K'$ as degree set, each one numbered with an integer in the
range $[1,n]$ (respectively, $[1,n']$). Then we can clearly consider a function of two variables $\Ov(x,y)$ on the grid of all pairs
$(x,y)$ of integers in $[1,n]\times[1,n']$, that gives the value of the overlap of the networks numbered as $x$ and $y$. Observe that the function $\Ov(x,y)$
has a global minimum/maximum. These extremal values will be denoted by $\MinOv(K,K')$ and $\MaxOv(K,K')$, or by $\MinOv_N(p,p')$ and $\MaxOv_N(p,p')$. The
problem of finding or estimating $\MinOv_N(p,p')$ and $\MaxOv_N(p,p')$ naturally arises. Note that a brute force algorithm to compute them
by exploring $\Ov(x,y)$ for all $(x,y)\in R$ is not feasible, since $n$ and $n'$ are of order $N!$. In this section we give an upper bound for $\MinOv_N(p,p')$ in terms of
the size $N$ and the degree distributions $p(k),p'(k)$. The analogous problem for $\MaxOv_N(p,p')$ will be the matter of Sect.~\ref{s4}.

We need to recall the standard \emph{configuration model algorithm} \cite{beca,boll,mr}
to generate a random network with a given degree distribution and size. We will use the following fast and efficient version of the algorithm. Let $K$ be a
degree set and let $(k_1,k_2,\ldots,k_N)$ be any degree sequence obtained by labeling the elements of $K$. In particular, $2L:=\sum k_i$ is even. Now take a vector $X$ of
length $2L$ containing $k_1$ times the integer 1 in the first $k_1$ entries, $k_2$ times the integer 2 in the following $k_2$ entries, etc. Each entry
$v$ of $X$ represents a single stub (or semi-link) attached at the node labeled as $v$. Then, take a random permutation of the entries of $X$ to get a new
array $Y$. Finally, read the contents of $Y$ in order, interpreting each pair of consecutive entries $v,w$ as a link between the nodes $v$ and $w$. For
an example, take $N=6$ and consider the degree distribution $p(k)$ defined by $p(1)=p(3)=1/6$, $p(2)=4/6$ and $p(k)=0$ for $k\ne1,2,3$. The corresponding
degree set is $\{1,2,2,2,2,3\}$. Take $(1,2,2,2,2,3)$ as degree sequence. Then, $X=(1,2,2,3,3,4,4,5,5,6,6,6)$. Now we permute $X$ at random,
obtaining $Y=(3,4,5,1,6,3,6,2,4,5,2,6)$. The set of links of the obtained network is $\{\{3,4\},\{5,1\},\{6,3\},\{6,2\},\{4,5\},\{2,6\}\}$. Observe that the link
$\{6,2\}$ appears twice. In general, the configuration model algorithm gives \emph{multigraphs} rather than graphs. It is well known, however, that
the fraction of self-loops and multi-links over the total number of links goes to 0 when $N\to\infty$ \cite{nsw}.

It seems natural to expect that the overlap between two networks of respective degree distributions $p(k),p'(k)$ and size $N$ generated via the
configuration model algorithm is very small. When the respective mean degrees are small with respect to the total size $N$ this turns out to be true. To prove
this fact, we need to estimate the probability that two given nodes are connected in a random network generated via the configuration model algorithm. So,
let $G$ be a network of $N$ nodes, $L$ links and degree distribution $p(k)$. Assume that $G$ has been obtained by means of the configuration model algorithm
starting with a degree sequence $(k_1,k_2,\ldots,k_N)$. Take at random any pair $\{i,j\}$ of nodes with $k_i\le k_j$. Next we estimate the probability
$p_{ij}$ that the network $G$ contains the link $\{i,j\}$. This probability is given by the quotient $a/b$, where $b$ is the total number of rearrangements $Y$
of the vector $X$ (here we are using the notation introduced in the definition of the configuration model) and $a$ is the number of such rearrangements having
at least two consecutive entries $i,j$ (or $j,i$) in places $Y_n,Y_{n+1}$ for $n=1,3,5,\ldots,2L-1$. We have that
\begin{equation}\label{b}
b=\displaystyle\frac{(2L)!}{k_1!k_2!\cdots k_N!}.
\end{equation}
Let us compute $a$. For $l=1,2,\ldots,L$, let $Y^l$ be the set of rearrangements $Y$ containing the entries $i,j$ (or $j,i$) in places
$Y_{2l-1},Y_{2l}$. Then, $a=|Y^1\cup Y^2\cup\ldots\cup Y^L|$. By the inclusion-exclusion principle, $a=a_1-a_2+\ldots+(-1)^{k_i-1}a_{k_i}$, where $a_l$
is the sum of the cardinalities of all intersections of $l$ sets in $Y^1,Y^2,\ldots,Y^L$. A simple combinatorial argument yields that
\[ a_l=\displaystyle\frac{{L \choose l}2^l(2L-2l)!}{k_1!k_2!\cdots k_{i-1}!(k_i-l)!k_{i+1}!\cdots k_{j-1}!(k_j-l)!k_j!\cdots k_N!}\mbox{ for $l\le k_i$}, \]
while $a_l=0$ for $k_i<l\le L$. Using the previous expression and the inclusion-exclusion principle we get that
\[ a=\displaystyle\frac{1}{k_1!k_2!\cdots k_{i-1}!k_{i+1}!\cdots k_{j-1}!(k_{j+1})!\cdots k_N!}\sum_{l=1}^{k_i}(-1)^{l-1}{L \choose l}2^l
\frac{(2L-2l)!}{(k_i-l)!(k_j-l)!}. \]
Taking it all into account, we get the following result.
\begin{theorem}\label{pij}
Let $G$ be a random network of $L$ links and $N$ nodes with degree sequence $(k_1,k_2,\ldots,k_N)$, generated via the configuration model
algorithm. Let $\{i,j\}$ be any pair of nodes with $k_i\le k_j$. Then, the probability that $G$ contains the link $\{i,j\}$ is
\[ p_{ij}=\displaystyle\frac{L!k_i!k_j!}{(2L)!}\sum_{l=1}^{k_i}(-1)^{l-1}2^l\frac{(2L-2l)!}{l!(L-l)!(k_i-l)!(k_j-l)!}. \]
\end{theorem}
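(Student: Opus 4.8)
The plan is to compute $p_{ij}$ directly as the ratio $a/b$ already set up above, where $b$ counts all admissible rearrangements $Y$ of the stub vector $X$ and $a$ counts those that realise the link $\{i,j\}$. First I would make the probability model precise: although the algorithm permutes the $2L$ physical stubs uniformly, stubs carrying the same node label are interchangeable, so the relevant sample space is the set of distinct label-sequences $Y$. Each such sequence arises from exactly $k_1!k_2!\cdots k_N!$ of the $(2L)!$ stub-permutations, so the uniform law on permutations descends to the uniform law on label-sequences, and $p_{ij}$ is genuinely the quotient of favourable over total sequences. This yields $b=(2L)!/(k_1!\cdots k_N!)$ at once.

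Next I would identify the event. Reading $Y$ two entries at a time splits the $2L$ positions into $L$ ordered slots, and ``$G$ contains $\{i,j\}$'' means that at least one slot holds the unordered pair $\{i,j\}$ (possibly several, which in the multigraph corresponds to a multi-edge---this is exactly why a single-occurrence count is not enough and inclusion--exclusion is forced). Writing $Y^l$ for the set of label-sequences whose $l$-th slot carries $\{i,j\}$, we have $a=|Y^1\cup\cdots\cup Y^L|$. The key combinatorial step is to evaluate a generic $l$-fold intersection: fixing $l$ slots to each hold the pair consumes $l$ copies of the label $i$ and $l$ of the label $j$, each fixed slot can be filled in the two orders $(i,j)$ or $(j,i)$, and the remaining $2L-2l$ stubs are arranged freely, giving $2^l(2L-2l)!/\bigl(\prod_{m\ne i,j}k_m!\,(k_i-l)!(k_j-l)!\bigr)$ sequences. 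This value is independent of which $l$ slots were chosen, so summing over the $\binom{L}{l}$ choices yields the stated $a_l$; note that it vanishes once $l>k_i$ because there are not enough copies of $i$ (recall $k_i\le k_j$), which truncates the sum.

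Finally I would assemble the pieces. Inclusion--exclusion gives $a=\sum_{l=1}^{k_i}(-1)^{l-1}a_l$, and forming $p_{ij}=a/b$ the common factor $\prod_{m\ne i,j}k_m!$ appearing in every $a_l$ cancels against $b$, leaving the product $k_i!k_j!$ of the two factorials absent from that factor; expanding $\binom{L}{l}=L!/(l!(L-l)!)$ then produces exactly
\[
p_{ij}=\frac{L!\,k_i!\,k_j!}{(2L)!}\sum_{l=1}^{k_i}(-1)^{l-1}2^l\frac{(2L-2l)!}{l!(L-l)!(k_i-l)!(k_j-l)!}.
\]
I expect the only real obstacle to be the careful bookkeeping in the intersection count---keeping the indistinguishable-stub convention consistent between $a$ and $b$, and confirming that distinct $l$-subsets of slots contribute equally---since once $a_l$ is correct the remaining simplification is purely formal.
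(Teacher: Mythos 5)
Your proposal is correct and follows essentially the same route as the paper: the same quotient $a/b$ with $b=(2L)!/(k_1!\cdots k_N!)$, the same inclusion--exclusion over the $L$ slots with $l$-fold intersection count $\binom{L}{l}2^l(2L-2l)!/\bigl(\prod_{m\ne i,j}k_m!\,(k_i-l)!(k_j-l)!\bigr)$ truncated at $l=k_i$, and the same final cancellation. Your explicit justification that the uniform law on stub-permutations descends to the uniform law on distinct label-sequences is a point the paper leaves implicit, but it does not change the argument.
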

The expression given by Theorem~\ref{pij} is too complex to be used to estimate the expected overlap between two random networks. Instead, we will
use the following standard approximation for the probability $p_{ij}$ \cite{dgm,nsw}:
\begin{equation}\label{pijapprox}
p_{ij}\approx \frac{k_ik_j}{2L-1}.
\end{equation}
This formula can be obtained from the proof of Theorem~\ref{pij} after replacing $a$ simply by $a_1$ (here we are using the notation of the proof).
The approximation \eqref{pijapprox} is good enough only when $k_i$ and $k_j$ are small with respect to $L$, in particular when we consider networks
with bounded mean degree and large size $N$, which is the case for most modeling applications. However, in general \eqref{pijapprox} can significantly
differ from the exact formula given by Theorem~\ref{pij}.

Using the approximation \eqref{pijapprox} we show that the expected overlap between two random networks generated via the configuration model is
very small, regardless of the particular distributions $p(k),p'(k)$, as the next result states.

\begin{theorem}\label{minov}
Let $p(k),p'(k)$ be two degree distributions with respective means $\chull{k}$ and $\chull{k'}$. Let $G,G'$ be two networks of $N$ nodes and
degree distributions $p(k)$ and $p'(k)$ generated via the configuration model algorithm. Assume that $N$ is big enough with respect
to $\chull{k}$ and $\chull{k'}$ in such a way that the approximation \eqref{pijapprox} holds. Then, the expected overlap between $G$ and $G'$ can
be approximated by
\[ \Ov(G,G')\approx\displaystyle\frac{\chull{k}\chull{k'}}{N(\chull{k}+\chull{k'})-\chull{k}\chull{k'}}. \]
\end{theorem}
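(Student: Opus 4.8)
The plan is to reduce the whole computation to the expected number of common links $\mathbb{E}[|E\cap E'|]$. Indeed, the numbers of links $L$ of $G$ and $L'$ of $G'$ are fixed by the degree distributions through (S4), namely $2L=N\chull{k}$ and $2L'=N\chull{k'}$, so the overlap
\[
\Ov(G,G')=\frac{|E\cap E'|}{L+L'-|E\cap E'|}
\]
is a deterministic function of the single random quantity $|E\cap E'|$. Since $G$ and $G'$ are generated independently, the event that a fixed pair $\{i,j\}$ is a link of $G$ is independent of the event that it is a link of $G'$; hence, by linearity of expectation, $\mathbb{E}[|E\cap E'|]=\sum_{\{i,j\}}p_{ij}\,p'_{ij}$, where $p_{ij}$ and $p'_{ij}$ are the connection probabilities in the two layers.

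First I would substitute the approximation \eqref{pijapprox}, $p_{ij}\approx k_ik_j/(2L-1)$ and $p'_{ij}\approx k'_ik'_j/(2L'-1)$, which is legitimate in the regime $N\gg\chull{k},\chull{k'}$ assumed in the statement, and replace $2L-1,2L'-1$ by $2L,2L'$ to leading order, obtaining
\[
\mathbb{E}[|E\cap E'|]\approx\frac{1}{(2L)(2L')}\sum_{\{i,j\}}k_ik_jk'_ik'_j .
\]
The core step is to evaluate this sum. Summing over ordered pairs and using $\sum_{i\ne j}a_ia_j=\big(\sum_i a_i\big)^2-\sum_i a_i^2$ with $a_i=k_ik'_i$ gives
\[
\sum_{\{i,j\}}k_ik_jk'_ik'_j=\frac12\left[\Big(\sum_i k_ik'_i\Big)^2-\sum_i (k_ik'_i)^2\right],
\]
where the first term is of order $N^2$ and the subtracted one only of order $N$, hence negligible as $N\to\infty$.

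The main obstacle, and the point where the independence hypothesis is essential, is to rewrite $\sum_i k_ik'_i$ in terms of the mean degrees: this sum encodes the node-by-node pairing of the two degree sequences, i.e. the cross-layer degree correlation. Because the two layers are generated independently there is no such correlation, so I would use $\sum_i k_ik'_i\approx(\sum_i k_i)(\sum_i k'_i)/N=(2L)(2L')/N=N\chull{k}\chull{k'}$ (equivalently $\chull{kk'}\approx\chull{k}\chull{k'}$). Substituting back,
\[
\mathbb{E}[|E\cap E'|]\approx\frac{N^2\chull{k}^2\chull{k'}^2/2}{(2L)(2L')}=\frac{N^2\chull{k}^2\chull{k'}^2/2}{N^2\chull{k}\chull{k'}}=\frac{\chull{k}\chull{k'}}{2},
\]
which is $O(1)$ and therefore negligible against $L+L'=N(\chull{k}+\chull{k'})/2=O(N)$, confirming a posteriori that the overlap is small.

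Finally I would substitute this value into the overlap formula. As $|E\cap E'|$ is $O(1)$ and concentrated while the denominator grows linearly in $N$, replacing $|E\cap E'|$ by its expectation introduces only higher-order error, so
\[
\Ov(G,G')\approx\frac{\chull{k}\chull{k'}/2}{N(\chull{k}+\chull{k'})/2-\chull{k}\chull{k'}/2}=\frac{\chull{k}\chull{k'}}{N(\chull{k}+\chull{k'})-\chull{k}\chull{k'}},
\]
which is the claimed expression. The three delicate points to justify are the suppression of the $O(N)$ correction in the squared sum, the no-correlation replacement of $\sum_i k_ik'_i$ by $N\chull{k}\chull{k'}$, and the commuting of the expectation with the ratio up to higher-order terms.
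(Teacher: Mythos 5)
Your proposal is correct and follows essentially the same route as the paper: both use the approximation \eqref{pijapprox} together with the independence of the two layers to estimate the expected number of common links versus the expected number of links in the union, and then take the ratio. The only difference is presentational — you compute $\mathbb{E}[|E\cap E'|]$ by an explicit sum over pairs and make the cross-layer decorrelation $\sum_i k_ik'_i\approx N\chull{k}\chull{k'}$ explicit, whereas the paper works with the probability $\chull{k}/N$ that a randomly chosen pair of nodes is connected in each layer and multiplies these probabilities directly.
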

\begin{proof}
Let $L,L'$ be the number of links of $G$ and $G'$ respectively. Assume that $G$ and $G'$ have been generated via the configuration model algorithm
starting with respective degree sequences $(k_1,k_2,\ldots,k_N)$ and $(k'_1,k'_2,\ldots,k'_N)$. Using the approximation \eqref{pijapprox} we can compute
the probability $p$ that two different nodes chosen at random are neighbors in $G$:
\begin{equation}\label{prob2}
\displaystyle p\approx\frac{1}{2L-1}\sum_{k_i,k_j}k_ip(k_i)k_jp(k_j) = \frac{\chull{k}^2}{2L-1}\approx\frac{\chull{k}^2}{2L}=\frac{\chull{k}}{N},
\end{equation}
where in the last expression we have used (S4). Now the expected overlap between $G$ and $G'$ can be computed as the probability that
two different nodes are connected in both $G$ and $G'$ over the probability that they are connected in $G\cup G'$ which, by virtue of \eqref{prob2}, is
\[ \displaystyle\frac{\chull{k}\chull{k'}/N^2}{1-\left(1-\frac{\chull{k}}{N}\right)\left(1-\frac{\chull{k'}}{N}\right)}. \]
\qed
\end{proof}

Theorem~\ref{minov} tells us that given $N$ and any two degree distributions $p(k),p'(k)$, the minimum overlap $\MinOv_N(p,p')$ is very close
to 0, at least when $N$ is big with respect to the expected values $\chull{k}$ and $\chull{k'}$. Of course, for small networks this is not true in general.

\section{An upper bound for the maximum overlap}\label{s4}
In this section we give an upper bound for $\MaxOv_N(p,p')$ in terms of the size $N$ and the degree distributions $p(k),p'(k)$.

Let $G,G'$ be two networks of $N$ nodes and empirical degree distributions $p(k),p'(k)$, with means $\chull{k}$ and $\chull{k'}$. Let $L$
and $L'$ be the number of links of $G$ and $G'$. If $E$ and $E'$ are the sets of links of $G$ and $G'$, then by definition
\begin{equation}\label{F(x)}
\Ov(G,G')=\displaystyle\frac{|E\cap E'|}{|E\cup E'|}=\frac{|E\cap E'|}{L+L'-|E\cap E'|}=\frac{x}{(\chull{k}+\chull{k'})\frac{N}{2}-x}=:F(x),
\end{equation}
where $x$ stands for $|E\cap E'|$ and in the last equality we have used (S4). Now observe that $F(x)$ is increasing as a function of $x$. Finally,
note that $x$ cannot be larger than $\min\{L,L'\}$. Assume without loss of generality that $L\le L'$. So, an upper bound for the maximum overlap permitted
between $G$ and $G'$ is obtained when replacing $x$ by $L=\chull{k}N/2$ in the previous expression, leading to
\[ \Ov(G,G')\le\displaystyle\frac{\chull{k}}{\chull{k'}}. \]
So, we have proved the following result.

\begin{lemma}\label{uppmax}
Let $p(k),p'(k)$ be two empirical degree distributions of $N$ nodes with respective means $\chull{k}$ and $\chull{k'}$. Then,
\[ MaxOv_N(p,p')\le\displaystyle\frac{\min\{\chull{k},\chull{k'}\}}{\max\{\chull{k},\chull{k'}\}}. \]
\end{lemma}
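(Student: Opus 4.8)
The plan is to bound the overlap $\Ov(G,G')$ by maximizing the number of shared links $x = |E\cap E'|$ subject to the hard constraint that the common links must be a subset of the links of each network. The key structural observation is that the overlap $\Ov(G,G')$, written as a function $F(x)$ of the single variable $x$, is monotonically increasing, so the supremum of the overlap is achieved precisely when $x$ is as large as possible. The whole argument therefore reduces to identifying the largest admissible value of $x$ and substituting it into $F$.

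First I would express the overlap explicitly in terms of $x$. Using the definition $\Ov(G,G') = |E\cap E'|/|E\cup E'|$ and the inclusion--exclusion identity $|E\cup E'| = L + L' - |E\cap E'|$, and then invoking condition (S4) to write $L = \chull{k}N/2$ and $L' = \chull{k'}N/2$, I obtain
\[
\Ov(G,G') = \frac{x}{(\chull{k}+\chull{k'})\tfrac{N}{2} - x} =: F(x).
\]
Second, I would check that $F$ is increasing in $x$: since the denominator decreases as $x$ grows while the numerator increases, $F$ is strictly increasing on its domain. Third, I would pin down the constraint on $x$. Because the common links $E\cap E'$ form a subset of both $E$ and $E'$, we must have $x \le \min\{L, L'\}$. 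Assuming without loss of generality that $L \le L'$, the maximum feasible value is $x = L = \chull{k}N/2$.

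Fourth, I would substitute this extremal value into $F$ and simplify. Plugging $x = \chull{k}N/2$ into $F(x)$ collapses the expression to $\chull{k}/\chull{k'}$, giving the bound $\Ov(G,G') \le \chull{k}/\chull{k'}$ under the assumption $\chull{k}\le\chull{k'}$. Writing this symmetrically (without the assumption on which mean is larger) yields exactly $\min\{\chull{k},\chull{k'}\}/\max\{\chull{k},\chull{k'}\}$, which is the claimed upper bound for $\MaxOv_N(p,p')$.

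I do not anticipate a serious obstacle here; the result is essentially a one-variable optimization with a monotone objective, and the real content is the correct bookkeeping via (S4) and the combinatorial constraint $x \le \min\{L,L'\}$. The one point worth stating carefully is that the bound $x \le \min\{L,L'\}$ is a \emph{necessary} condition coming purely from set inclusion, so substituting it gives a genuine upper bound on $\MaxOv_N(p,p')$ rather than its exact value --- indeed it need not be attainable, since not every common edge configuration of size $\min\{L,L'\}$ is jointly realizable by two networks with the prescribed degree distributions. Establishing the attainability (the matter of the algorithm in Sect.~\ref{s5}) is a separate and harder question; for the present inequality the monotonicity of $F$ together with the feasibility constraint suffices.
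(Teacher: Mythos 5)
Your proof is correct and follows essentially the same route as the paper: expressing the overlap as the increasing function $F(x)=x/\bigl((\chull{k}+\chull{k'})\tfrac{N}{2}-x\bigr)$ via (S4), bounding $x=|E\cap E'|$ by $\min\{L,L'\}$, and substituting. Your closing remark that the bound is a necessary condition and need not be attained is a fair observation that the paper also makes, noting the bound is ``too crude in general'' and refining it in Theorem~\ref{ordenades}.
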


The upper bound in Lemma~\ref{uppmax} is too crude in general. In particular, one can have two completely different degree distributions with the same expected
values. In this situation, at least intuitively, there are important restrictions for the maximum value of the overlap, while the upper bound in
Lemma~\ref{uppmax} is 1. Let us see how to improve it.

Assume that we are given two degree sequences $D=(k_1,k_2,\ldots,k_N)$ and $D'=(k'_1,k'_2,\ldots,k'_N)$, with $\sum k_i=\chull{k}N=:2L$ and
$\sum k'_i=\chull{k'}N=:2L'$. Since $F(x)$ in \eqref{F(x)} is increasing in $x$, an upper bound for the overlap is obtained when replacing $x$ by the maximum
possible number of links of the intersection network. In the proof of Lemma~\ref{uppmax} this maximum was taken to be $\min\{L,L'\}$. To get a much better
estimate, look at a particular position $1\le i\le N$ of the degree sequences. It is clear that the intersection network cannot have more than
$\min\{k_i,k'_i\}$ links attached at node $i$. In consequence, the total number of links of the intersection network is at most
\[ L(D,D'):=\displaystyle\frac{1}{2}\sum_{i=1}^N \min\{k_i,k'_i\}. \]

The previous constant depends on the degree sequences $D$ and $D'$. Of course, reordering the elements of $D$ and $D'$ by means of permutations $\sigma,\tau$ we
get two degree sequences $\sigma(D),\tau(D')$ representing two networks with the same degree distribution. In consequence, we have the following result.

\begin{theorem}\label{Lpp}
Let $p(k),p'(k)$ be two empirical degree distributions of $N$ nodes and respective degree sets $K,K'$. Let $L_N(p,p'):=\max\{L(D,D')\}$, where the
maximum is taken over all pairs $D,D'$ of degree sequences obtained rearranging the elements of $K$ and $K'$ respectively. Then,
\[ \MaxOv_N(p,p')\le\displaystyle\frac{L_N(p,p')}{(\chull{k}+\chull{k'})\frac{N}{2}-L_N(p,p')}. \]
\end{theorem}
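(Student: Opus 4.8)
The plan is to build directly on the monotone function $F$ introduced in \eqref{F(x)}, so that the whole statement reduces to producing a realization-independent upper bound on the number $x=|E\cap E'|$ of common links. Recall that $F(x)=x/((\chull{k}+\chull{k'})\tfrac{N}{2}-x)$ is increasing wherever its denominator is positive, and that this denominator equals $L+L'-x\ge\max\{L,L'\}>0$ as soon as $x\le\min\{L,L'\}$; hence any upper bound on $x$ translates, via $F$, into an upper bound on $\Ov(G,G')$.

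First I would fix arbitrary realizations $G,G'$ of $p,p'$ on the common vertex set $V=\{1,\dots,N\}$, with their induced degree sequences $D=(k_1,\dots,k_N)$ and $D'=(k'_1,\dots,k'_N)$; these are particular rearrangements of $K$ and $K'$. The key elementary observation is the node-wise bound: a common link incident to node $i$ contributes to the degree of $i$ in both $G$ and $G'$, so the number of common links attached at $i$ cannot exceed $\min\{k_i,k'_i\}$. Summing over $i$ counts each common link exactly twice, once at each endpoint, so $2x\le\sum_{i=1}^N\min\{k_i,k'_i\}$, that is, $x\le L(D,D')$.

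Next I would pass from the particular sequences $D,D'$ to a bound uniform over all realizations. Since $L(D,D')\le L_N(p,p')$ by the very definition of $L_N$ as the maximum over all rearrangements of $K$ and $K'$ (a maximum over finitely many permutation pairs, so it exists), monotonicity of $F$ gives $\Ov(G,G')=F(x)\le F(L(D,D'))\le F(L_N(p,p'))$. As $G,G'$ were arbitrary, taking the supremum over all such pairs yields $\MaxOv_N(p,p')\le F(L_N(p,p'))$, which is precisely the claimed inequality.

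The argument carries no deep obstacle; the only points demanding care are the double-counting in the node-wise bound and the order of the quantifiers. Indeed, the estimate $x\le L(D,D')$ holds only for the specific sequences induced by $G$ and $G'$, and it is solely after invoking $L_N=\max L(D,D')$ that one secures a bound valid simultaneously for every realization. I would also record in passing that the denominator of $F$ stays positive on the relevant range, since $L(D,D')=\tfrac{1}{2}\sum_i\min\{k_i,k'_i\}\le\min\{L,L'\}$ for every pair and hence $L_N(p,p')\le\min\{L,L'\}$; thus $F$ is genuinely increasing throughout and no degeneracy arises.
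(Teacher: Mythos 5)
Your proof is correct and follows essentially the same route as the paper, which also bounds the number of common links node-wise by $\min\{k_i,k'_i\}$, halves the sum to account for double counting, passes to the maximum $L_N(p,p')$ over rearrangements, and invokes the monotonicity of $F$ from \eqref{F(x)}. Your explicit checks of the double-counting step and of the positivity of the denominator of $F$ are welcome additions but do not change the argument.
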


It is not easy to give a closed formula for $L_N(p,p')$ in terms of $N$, $p(k)$ and $p'(k)$. Alternatively, one could compute $L(D,D')$ for all possible
pairs $D,D'$ and select the maximum. This brute force algorithm is not feasible since the number of operations is about $N!$. Fortunately, there exists
an alternative and very fast algorithm to compute $L_N(p,p')$ that relies on the following simple lemma.

\begin{lemma}\label{rearra}
Let $(k_1,k_2,\ldots,k_N)$ and $(k'_1,k'_2,\ldots,k'_N)$ be two sequences of nonnegative numbers such that $k_1\le k_2\le\ldots\le k_N$. If there exists
a pair of indices $i<j$ such that $k'_i\ge k'_j$, then
\[ \min\{k_i,k'_i\}+\min\{k_j,k'_j\}\le\min\{k_i,k'_j\}+\min\{k_j,k'_i\}. \]
\end{lemma}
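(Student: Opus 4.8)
The plan is to recognize this as the \emph{supermodularity} (or rearrangement) property of the function $\min$: among the two ways of pairing the values $\{k_i,k_j\}$ with the values $\{k'_i,k'_j\}$, the monotone pairing — matching the larger $k$ with the larger $k'$ — maximizes the sum of the pairwise minima. Concretely, I would first record the two hypotheses in a convenient form: since the first sequence is nondecreasing and $i<j$, we have $k_i\le k_j$, while the assumption gives $k'_i\ge k'_j$. Writing $a:=k_i$, $b:=k_j$, $c:=k'_i$, $d:=k'_j$, the claim becomes $\min\{a,c\}+\min\{b,d\}\le\min\{a,d\}+\min\{b,c\}$ under the constraints $a\le b$ and $c\ge d$.

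Next I would dispose of the inequality by a short case split on the position of $a$ relative to $d$, exploiting the elementary monotonicity facts that $\min\{x,s\}\le\min\{x,t\}$ whenever $s\le t$. In the first case $a\le d$: since $d\le c$ we also get $a\le c$, so $\min\{a,c\}=a=\min\{a,d\}$, and the two sides differ only in their remaining terms, where $\min\{b,d\}\le\min\{b,c\}$ because $d\le c$. In the second case $a>d$: then $b\ge a>d$ forces $\min\{b,d\}=d=\min\{a,d\}$, so the inequality collapses to $\min\{a,c\}\le\min\{b,c\}$, which holds because $a\le b$. In either case the desired inequality follows.

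As an alternative route I would note the identity $\min\{x,y\}=\tfrac12\bigl(x+y-|x-y|\bigr)$, which turns the claim into $|a-d|+|b-c|\le|a-c|+|b-d|$ after the common linear terms cancel; this is again the statement that pairing two sorted lists in the same order minimizes the total absolute discrepancy, and can be settled by the same case analysis. I expect no genuine obstacle here: the content is entirely the two-variable supermodularity of $\min$, and the only care needed is bookkeeping the cases so that every comparison invoked ($a\le b$, $c\ge d$, and the derived chains $a\le d\le c$ or $d<a\le b$) is justified by the hypotheses.
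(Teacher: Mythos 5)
Your proof is correct and follows essentially the same route as the paper's, namely a direct case analysis on the relative order of the four numbers; the paper simply enumerates all six compatible orderings and declares each check trivial, whereas you organize the verification more economically into two cases ($k_i\le k'_j$ versus $k_i>k'_j$). Both arguments are sound, and your alternative reformulation via $\min\{x,y\}=\tfrac12(x+y-|x-y|)$ is a valid bonus but not needed.
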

\begin{proof}
Since $k_i\le k_j$ and $k'_i\ge k'_j$, there are 6 cases to be considered to test the prescribed inequality:
\begin{itemize}
\item[$\bullet$] $k'_j\le k'_i\le k_i\le k_j$
\item[$\bullet$] $k'_j\le k_i\le k'_i\le k_j$
\item[$\bullet$] $k'_j\le k_i\le k_j\le k'_i$
\item[$\bullet$] $k_i\le k'_j\le k'_i\le k_j$
\item[$\bullet$] $k_i\le k'_j\le k_j\le k'_i$
\item[$\bullet$] $k_i\le k_j\le k'_j\le k'_i$.
\end{itemize}
It is trivial to check that the lemma holds in each case.
\qed
\end{proof}

As a consequence of Lemma~\ref{rearra} and Theorem~\ref{Lpp}, we get the following result.

\begin{theorem}\label{ordenades}
Let $p(k),p'(k)$ be two empirical degree distributions of $N$ nodes. Let $D=(k_1,k_2,\ldots,k_N)$ and $D'=(k'_1,k'_2,\ldots,k'_N)$ be the degree
sequences obtained by ordering increasingly the respective degree sets. Then,
\[ \MaxOv_N(p,p')\le\displaystyle\frac{\displaystyle\sum_{i=1}^N\min\{k_i,k'_i\}}{\displaystyle\sum_{i=1}^N\max\{k_i,k'_i\}}. \]
\end{theorem}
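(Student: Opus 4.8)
The plan is to pin down the constant $L_N(p,p')$ appearing in Theorem~\ref{Lpp} by showing it is attained exactly when both degree sequences are sorted increasingly, and then to feed that value into the bound of Theorem~\ref{Lpp} and simplify. First I would note that $L(D,D')=\frac{1}{2}\sum_{i=1}^N\min\{k_i,k'_i\}$ depends only on the position-by-position pairing of the two sequences, so applying the same permutation to both $D$ and $D'$ leaves it unchanged. Consequently the maximum of $L(D,D')$ over all rearrangements of the degree sets $K,K'$ equals its maximum when $D$ is held fixed in increasing order and only the arrangement of $D'$ is varied. This reduces the optimization to a one-sided rearrangement problem, which is precisely the setting of Lemma~\ref{rearra}.

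Next I would run an exchange (bubble-sort) argument powered by Lemma~\ref{rearra}. Keeping $D$ sorted increasingly, suppose the current arrangement of $D'$ is not increasingly sorted; then it contains an adjacent inversion, i.e.\ indices $i<j=i+1$ with $k'_i>k'_j$, so in particular $k'_i\ge k'_j$ and Lemma~\ref{rearra} applies: swapping the entries $k'_i$ and $k'_j$ does not decrease $\sum_i\min\{k_i,k'_i\}$. Each such adjacent swap strictly reduces the number of inversions of $D'$, so after finitely many swaps $D'$ becomes increasingly sorted while the sum $\sum_i\min\{k_i,k'_i\}$ never decreases along the way. Applying this sorting procedure to an arbitrary arrangement — in particular to one attaining the maximum — shows that the increasingly sorted arrangement of $D'$ also attains it. Hence $L_N(p,p')=\frac{1}{2}\sum_{i=1}^N\min\{k_i,k'_i\}$, with both $D$ and $D'$ ordered increasingly.

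Finally I would substitute this value into Theorem~\ref{Lpp}. By (S4) the denominator there is $(\chull{k}+\chull{k'})\frac{N}{2}-L_N(p,p')=\frac{1}{2}\sum_i(k_i+k'_i)-\frac{1}{2}\sum_i\min\{k_i,k'_i\}=\frac{1}{2}\sum_i\max\{k_i,k'_i\}$, where I use the elementary identity $k_i+k'_i=\min\{k_i,k'_i\}+\max\{k_i,k'_i\}$ termwise. Cancelling the common factor $\tfrac{1}{2}$ between numerator and denominator then yields exactly $\MaxOv_N(p,p')\le\frac{\sum_i\min\{k_i,k'_i\}}{\sum_i\max\{k_i,k'_i\}}$.

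The step I expect to be the main obstacle is justifying that the increasingly sorted arrangement is a \emph{global} maximizer rather than merely stationary: the inequality supplied by Lemma~\ref{rearra} is only weak, so a naive contradiction argument ("a maximizer can have no inversion") fails, because an exchange might leave the sum unchanged. The constructive termination argument above is what circumvents this — sorting any arrangement through adjacent-inversion swaps, each weakly improving the sum and strictly decreasing a nonnegative integer (the inversion count) that forces termination — thereby certifying that the sorted value dominates that of every arrangement.
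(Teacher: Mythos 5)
Your proposal is correct and follows essentially the same route as the paper: reduce to Theorem~\ref{Lpp}, use the exchange inequality of Lemma~\ref{rearra} to show the maximum of $L(D,D')$ is attained on the increasingly sorted sequences, and simplify the bound via the identity $k_i+k'_i=\min\{k_i,k'_i\}+\max\{k_i,k'_i\}$. If anything, you supply more detail than the paper does, which simply asserts that the maximum is attained at the sorted pair; your explicit reduction to a one-sided rearrangement and the terminating adjacent-swap argument (needed because the lemma's inequality is only weak) close a gap the paper leaves to the reader.
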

\begin{proof}
Lemma~\ref{rearra} states that if $S,S'$ are degree sequences fitting to $p(k)$ and $p'(k)$ such that $S$ is increasingly ordered and there is a
pair of entries $s'_i\ge s'_j$ of $S'$ with $i<j$, then if we swap both entries the obtained sequence $S''$ satisfies $L(S,S')\le L(S,S'')$.
Therefore, the maximum $L_N(p,p'):=\max\{L(S,S')\}$ is attained precisely in $L(D,D')$. Since, by definition, $L(D,D')=(1/2)\sum\min\{k_i,k'_i\}$,
Theorem~\ref{Lpp} and (S4) yield
\[ \MaxOv_N(p,p')\le\displaystyle\frac{(1/2)\sum_i\min\{k_i,k'_i\}}{(1/2)\big(\sum_i k_i+k'_i\big)-(1/2)\sum_i\min\{k_i,k'_i\}}. \]
Since $k_i+k'_i=\max\{k_i,k'_i\}+\min\{k_i,k'_i\}$, the theorem follows.
\qed
\end{proof}

Theorem~\ref{ordenades} allows us to design an efficient algorithm to compute an upper bound for the maximum overlap.
The algorithm takes as input the empirical distributions $p(k)$ and $p'(k)$. Sort increasingly the elements of the respective degree sets
to get sequences $D=(k_1,k_2,\ldots,k_N)$ and $D'=(k'_1,k'_2,\ldots,k'_N)$. Finally, return $\sum\min\{k_i,k'_i\}/\sum\max\{k_i,k'_i\}$.
In Table~\ref{compareup} we show the output of this algorithm for several pairs of empirical degree distributions, obtained via the configuration model from a
corresponding pair of (theoretical) distributions. In all cases, $N=10000$. Here "SF" stands for a scale-free network with $p(k) = C k^{-\gamma}$ with $\gamma=3$, minimum degree $m$, cut-off $k_c = m N^{1/2}$, and the normalization constant $C$, for which $\chull{k} \approx 2m$ (see Sect.~\ref{s6} for details). "Exponential" corresponds to $p(k)=(1/m)e^{1-k/m}$ with minimum degree $m$, for which $\chull{k}=2m$. "Poisson" corresponds to $p(k)=\lambda e^{-\lambda}/k!$ with $\lambda=\chull{k}$, and "Regular" stands for a random network for which all nodes have the same degree.

\begin{table}
\begin{center}
\begin{tabular}{|c|c|c|c|c|} \hline
 & Regular & Poisson & SF & Exponential \\ \hline
Regular & 0.7693 & 0.7508 & 0.6301 & 0.6654 \\ \hline
Poisson & 0.7552 & 0.7709 & 0.7221 & 0.7739 \\ \hline
SF & 0.5451 & 0.6000 & 0.7688 & 0.7023 \\ \hline
Exponential & 0.6330 & 0.7077 & 0.7715 & 0.7706 \\ \hline
\end{tabular}
\end{center}
\caption{Upper bounds for the maximum overlap permitted between pairs of empirical distributions according to Theorem~\ref{ordenades}.
In all cases, $N=10000$. For the left column distributions, $\langle k\rangle=20$ while, for the upper ones, $\langle k\rangle=26$.} \label{compareup}
\end{table}

\section{An algorithm to get a prescribed overlap}\label{s5}
Assume that we have generated two random networks $G(0),G'(0)$ of $N$ nodes using the configuration model. Let $p(k)$, $p'(k)$ be the corresponding empirical degree distributions. This section aims at designing an efficient algorithm to construct two networks $G,G'$ of $N$ nodes with respective degree distributions $p(k)$ and $p'(k)$ in such a way that $\Ov(G,G')\approx\alpha$, for any given $\MinOv_N(p,p')\le\alpha\le\MaxOv_N(p,p')$. Taking into account that, in view of Theorem~\ref{minov}, $\Ov(G(0),G'(0))\approx0$,
it seems natural to propose an algorithm that works as follows. At each time step $t\ge0$, modify the networks $G(t),G'(t)$ a little bit by performing a \emph{local operation}
(an operation involving few nodes and/or links) to obtain new networks $G(t+1),G'(t+1)$ with empirical degree distributions $p(k),p'(k)$ in such a way that $\Ov(G(t+1),G'(t+1))$
is slightly larger than $\Ov(G(t),G'(t))$. Repeat until the overlap is close to $\alpha$.

The kind of local operation that we will use in the scheme above is a \emph{cross rewiring operation} \cite{xs}, according to the following definition.
Let $G(t),G'(t)$ be two networks of $N$ nodes. A \emph{good pair in $G(t)$ with respect to $G'(t)$} is a pair of links $\{a,b\}$, $\{c,d\}$ in $G(t)$ satisfying
the following conditions:
\begin{enumerate}
\item $\{a,b\}$ and $\{c,d\}$ are not links in $G'(t)$
\item $\{a,c\}$ and $\{b,d\}$ are not links in $G(t)$
\item $\{a,c\}$ is a link in $G'(t)$.
\end{enumerate}
Analogously we define a \emph{good pair in $G'(t)$ with respect to $G(t)$} by interchanging the roles of $G(t)$ and $G'(t)$ in the previous definition.
Given a good pair $\{a,b\}$, $\{c,d\}$ in $G(t)$ with respect to $G'(t)$, the associated \emph{cross-rewiring operation} consists of replacing the links $\{a,b\}$ and $\{c,d\}$ in $G(t)$ by $\{a,c\}$ and $\{b,d\}$ to get a new network $G(t+1)$. Observe that $G(t)$ and $G(t+1)$ are in general different as non-labelled networks. However,
the degrees of the involved nodes $a,b,c,d$ are not modified after performing the cross-rewiring. In consequence, $G(t)$ and $G(t+1)$ have the same degree distribution. On the other hand, set $G'(t+1)=G'(t)$ and let $E(t)$, $E(t+1)$, $E'(t)$, $E'(t+1)$ be respectively the sets of links of $G(t)$, $G(t+1)$, $G'(t)$, $G'(t+1)$. Then, $|E'(t+1)|=|E'(t)|$ and, by the definition of the cross rewiring operation over a good pair, $|E(t+1)|=|E(t)|$. Moreover, by the definition of a good pair, either $|E(t+1)\cap E'(t+1)|=|E(t)\cap E'(t)|+1$ if $\{b,d\}$ is a link in $G'(t)$ or $|E(t+1)\cap E'(t+1)|=|E(t)\cap E'(t)|+2$ otherwise. Then, if we denote $\Ov(G(t),G'(t))$  and $\Ov(G(t+1),G'(t+1))$ by $\Ov(t)$ and $\Ov(t+1)$ respectively, a trivial computation yields that
\begin{equation}\label{mesov}
\Ov(t+1)=\Ov(t)+\frac{x\Ov(t)^2+2x\Ov(t)+x}{L-x-x\Ov(t)},
\end{equation}
where $x\in\{1,2\}$ and $L=|E(t)|+|E'(t)|$. In other words, the overlap after performing a cross rewiring operation in a good pair of links slightly (but strictly) increases.

From now on, let $\MinOv_N(p,p')\le\alpha\le\MaxOv_N(p,p')$ be the desired overlap coefficient. In view of what has been said, the following algorithm seems natural. Use the configuration model to construct two random networks $G(0),G'(0)$ of size $N$ and degree distributions $p(k),p'(k)$. The expected overlap is close to 0. Now, at each time step $t\ge0$, choose at random (if it exists) a good pair of links in $G(t)$ with respect to $G'(t)$. Perform a cross rewiring operation in $G(t)$ using such a pair, obtaining a new network $G(t+1)$. Set $G'(t+1):=G'(t)$. Then, $\Ov(G(t+1),G'(t+1))>\Ov(G(t),G'(t))$ by (\ref{mesov}). If $\Ov(G(t+1),G'(t+1))\ge\alpha$, set $G:=G(t+1)$, $G':=G'(t+1)$ and stop. Otherwise, proceed to the next time step.

A serious objection can be raised against the above algorithm as stated: there is no reason to expect that proceeding in this way we can reach values of the overlap close to $\MaxOv_N(p,p')$. It may well be that no more good pairs can be found to be rewired, long before reaching the desired overlap $\alpha$. To overcome this problem, we turn back to the proof of Theorem~\ref{ordenades}: the number of common links containing a given node $i$ cannot be larger than $\min\{k_i,k'_i\}$, where $k_i$ and $k'_i$ are the degrees of node $i$ in the respective networks. According to the proof of Theorem~\ref{ordenades}, to maximize the number of possible common links that will be obtained by performing a sequence of cross rewiring operations, it is enough to relabel the nodes increasingly with the degree. However, in doing this, we should make sure that the overlap between the original random networks does not change significantly (and remains, in consequence, close to 0). To support this claim, see Table~\ref{beaf}.

\begin{table}
\begin{center}
\begin{tabular}{|c|c|c|c|c|} \hline
 & Regular & Poisson & SF & Exponential \\ \hline
Regular & 0.00051 & 0.000469 & 0.000458 & 0.000478 \\
 & 0.00051 & 0.000509 & 0.000478 & 0.000487 \\ \hline
Poisson & & 0.000599 & 0.000383 & 0.000503 \\
 & & 0.000789 & 0.000836 & 0.000807 \\ \hline
SF & & & 0.000565 & 0.000652 \\
 & & & 0.002548 & 0.001671 \\ \hline
Exponential & & & & 0.000681 \\
 & & & & 0.001755 \\ \hline
\end{tabular}
\end{center}
\caption{The overlap between two random networks before and after relabeling the nodes increasingly with the degree. In all cases, $N=10000$ and $\langle k\rangle=10$.} \label{beaf}
\end{table}

So, let us consider the following \emph{CR Algorithm} (standing for \emph{Cross rewiring}), taking $p(k)$, $p'(k)$, $N$ and $\alpha$ as input:

\paragraph{{\bf CR Algorithm}}
\begin{enumerate}[(CR1)]
\item Use the configuration model to get two random networks $H(0),H'(0)$ of size $N$ and degree distributions $p(k),p'(k)$. Sort
increasingly the respective degree sequences $(k_1,\ldots,k_N)$ and $(k'_1,\ldots,k'_N)$. This corresponds to relabeling the nodes of both $H(0)$, $H'(0)$ to
get two networks $G(0)$, $G'(0)$ isomorphic to $H(0)$, $H'(0)$ respectively in such a way that $k_i\le k_j$ and $k'_i\le k'_j$ whenever $i<j$. The
overlap between $G(0)$ and $G'(0)$ is close to 0.
\end{enumerate}
\noindent At each time step $t\ge0$:
\begin{enumerate}[(CR2)]
\item 
Choose at random (if it exists) a good pair of links in $G(t)$ with respect to $G'(t)$. Perform a cross rewiring operation in $G(t)$ using such a pair, obtaining a new network
$G(t+1)$. Set $G'(t+1):=G'(t)$. Then, by (\ref{mesov}), $\Ov(G(t+1),G'(t+1))>\Ov(G(t),G'(t))$. If $\Ov(G(t+1),G'(t+1))\ge\alpha$, set $G:=G(t+1)$, $G':=G'(t+1)$ and stop. Otherwise,
proceed to the next time step.
\end{enumerate}

It is clear that after a finite number $t_0$ of steps the algorithm will stop, either because no good pairs are found or because the overlap between $G(t_0)$ and $G'(t_0)$ is very close to $\alpha$. In any case, the output of the algorithm is the pair of networks $G(t_0),G'(t_0)$. It is also clear that the algorithm admits some variants. For instance, the cross rewiring operations can be performed also over good pairs in $G'(t)$ with respect to $G(t)$. A natural question is whether in general the algorithm may halt forced by the condition that no good pairs are found, \emph{before} having reached a value of the overlap close to $\alpha$. This question can be reworded as follows: does the algorithm produce a value of the overlap coefficient close to $\MaxOv_N(p,p')$ when we execute it with $\alpha=\MaxOv_N(p,p')$? (observe that in this case the algorithm will stop if and only if no good pairs are found). In Table~\ref{guai} we show the maximum overlap generated using the CR Algorithm for several pairs of distributions, together with the upper bounds computed via the Theorem~\ref{ordenades}. In all cases, the obtained overlap is reasonably close to the theoretical maximum.

\section{Simulations}\label{s6}
We have performed a series of stochastic simulations with pairs of networks of size $N=10000$. In order to evaluate the accuracy of the analytical predictions depending on the network structure, we have chosen several
(theoretical) degree distributions $p(k),p'(k)$ for each layer. Once the size $N$ and the respective distributions $p(k)$ and $p'(k)$ are chosen, we proceed as follows:
\begin{enumerate}
\item Generate two random networks $A_0$ and $B_0$ with empirical degree distributions $p_A(k)\approx p(k)$ and $p_B(k)\approx p'(k)$ using the standard configuration model.
\item Use the corresponding degree sets and Theorem~\ref{ordenades} to estimate the maximum overlap coefficient $\alpha_{\max}$ (between any two networks distributed according to $p_A(k), p_B(k)$).
\end{enumerate}
Now we are ready to test the relevance of the layer overlap as a model parameter by choosing several values of $\alpha$ in the range $(0,\alpha_{\max})$. For any of such values, we use the CR Algorithm to construct two networks $A_\alpha$, $B_\alpha$, distributed according to $p_A(k), p_B(k)$, with an overlap coefficient very close to $\alpha$. With these ingredients we can simulate, using the standard Gillespie algorithm \cite{Gillespie}, the stochastic time evolution of the infection spread. In each case the initial number of infected nodes is set to 1000 (10$\%$ of the population size). The infected individuals are drawn from the whole population with the same probability $1/N$. In fact, for each pair $A_\alpha,B_\alpha$ we run 10 simulations with 10 different initial sets of infected nodes in order to average the outputs.

\begin{table}
\begin{center}
\begin{tabular}{|c|c|c|c|c|} \hline
 & Regular & Poisson & SF & Exponential \\ \hline
  & 1 & 0.739020 &  0.564752 & 0.448772 \\
Regular & 1 & 0.7761 & 0.6112 & 0.5004 \\ \hline
 & & 0.993035 & 0.654052 &  0.583859   \\
Poisson & & 0.994325 & 0.7180 & 0.6345 \\ \hline
 & & & 0.97987 &  0.665794 \\
SF & & & 0.98575 & 0.7095 \\ \hline
\end{tabular}
\end{center}
\caption{Maximum overlap generated using the CR Algorithm (first row) vs the maximum value permitted by
Theorem~\ref{ordenades} (second row). In all cases, $N=10000$ and $\langle k\rangle=10$. } \label{guai}
\end{table}

What we compare with the simulation outputs is the numerical integration of the model \eqref{SIS-HMFM2}, feeding it with the empirical degree distribution $p_A(k)$. We use the empirical distribution instead of the theoretical one $p(k)$ because, when the variance of $p(k)$ is large (highly heterogeneous networks), there can be noticeable differences among distinct finite samples of $p(k)$, in particular with respect to the values of the highest degrees which, as we will see, have a noticeable impact on the epidemic dynamics. To avoid degree-degree correlations within a layer due to the occurrence of very high degrees in the generated degree sequence, we have normalized the power-law distribution $p(k) = Ck^{-3}$ to have a minimum degree $m$ and a maximum degree given by the cut-off $k_c(N) = mN^{1/2}$, defined as the value of the degree above which one expects to find at most one node in the whole network. This expression of $k_c(N)$ coincides with the so-called structural cut-off for this exponent of the power law (see \cite{Catanzaro}), and leads to the normalization constant $C=(\gamma-1) m^{\gamma-1} N/(N-1)$ and an expected degree $\langle k \rangle =2mN/(N-1) \approx 2m$.

As initial condition $i_k(0)$ to integrate \eqref{SIS-HMFM2} and according to the procedure in the stochastic simulations, we consider that the same fraction of susceptible nodes becomes infected for any degree $k$. In particular, we take $i_k(0)=0.1 p_A(k)$ for all $k$, which amounts to a $10\%$ of initially infected nodes.

There is however a crucial remark on the simulation experiments. Recall that the lack of degree-degree correlations inside each layer was a basic assumption in the derivation of \eqref{SIS-HMFM2}. Therefore, to asses the goodness of the model we must make sure that all pairs of networks $A_\alpha,B_\alpha$ used in our simulations satisfy this assumption. It is reasonable to expect that the pairs of networks created via the CR Algorithm are uncorrelated, since:
\begin{enumerate}
\item The initial networks $G(0),G'(0)$ are randomly generated via the configuration model algorithm, which is known to produce uncorrelated networks.
\item A cross rewiring operation in a good pair of links $\{a,b\}$, $\{c,d\}$ increases (respectively, decreases) the global degree-degree correlation if the new links connect the two nodes with the smaller degrees and the two nodes with the larger degrees (respectively, if one of the new links connects the node with the largest degree to the node with lowest degree). But the rewiring criterion in the CR Algorithm is intended to increase the overlap coefficient and has nothing to do with the degrees of the four involved nodes. So, some reconnections will increase the global degree-degree correlation and some will decrease it, thus expecting an overall balance.
\end{enumerate}
To support this claim, we show in Table~\ref{degdeg} the standard Pearson coefficient $r$ for each layer, computed from the two random variables defined by the degrees of the nodes at both ends of randomly chosen links \cite{newm}. Values of $r$ close to $-1$ (respectively 1) account for dissortative (resp. assortative) networks, while values close to 0 correspond to uncorrelated networks.

\begin{table}
\begin{center}
\begin{tabular}{|c|c|c|c|} \hline
 & $\alpha=0.2$ & $\alpha=0.4$ & $\alpha=0.6$ \\ \hline
 Poisson & -0.003353 & -0.003353 & 0.022467 \\
 SF & 0.016494 & 0.057019 & 0.085278 \\ \hline
 SF & -0.007762 & -0.007762 & -0.007762 \\
 Exponential & -0.010041 & -0.070960 & -0.070971 \\ \hline
 Poisson & -0.003353 & -0.003353 & -0.003353 \\
 Exponential & 0.040715 & 0.088316 &  0.139976 \\ \hline
\end{tabular}
\end{center}
\caption{Pearson coefficient to measure the degree-degree correlations in each layer for several pairs of networks obtained from the CR Algorithm. In all cases, $N=10000$ and $\langle k\rangle=10$. } \label{degdeg}
\end{table}

We have performed two series of experiments addressed to illustrate the influence of the two factors appearing in \eqref{pB|A}, both of which are related to the topology of the layers. The first factor accounts for the difference in link density between both layers (measured by the ratio of their mean degrees), whereas the second one is an increasing function of the overlap coefficient $\alpha$. First, we consider that both layers have exponential degree distributions but different minimum degrees and, hence, different mean degrees, $\langle k_A \rangle$ and $\langle k_B \rangle$. Figure~\ref{Exp-Exp} shows, for $\alpha=$ 0.1 (left panels) and 0.6 (right panels), the prevalence of the epidemic when $\langle k_A \rangle = 30 > \langle k_B \rangle = 20$ (top panels) and when $\langle k_A \rangle = 20 < \langle k_B \rangle = 30$ (bottom  panels). From this figure it follows that the epidemic will be better contained when (an important part of) layer A can be embedded in layer B, which is only possible when $\langle k_A \rangle < \langle k_B \rangle$. Such an embedding is clearly not possible when the number of links is much larger in layer A than in layer B ($\langle k_A \rangle > \langle k_B \rangle$).

\begin{figure}[ht]
\begin{tabular}{ll}
\hspace{-1cm}
	\includegraphics[scale=0.35]{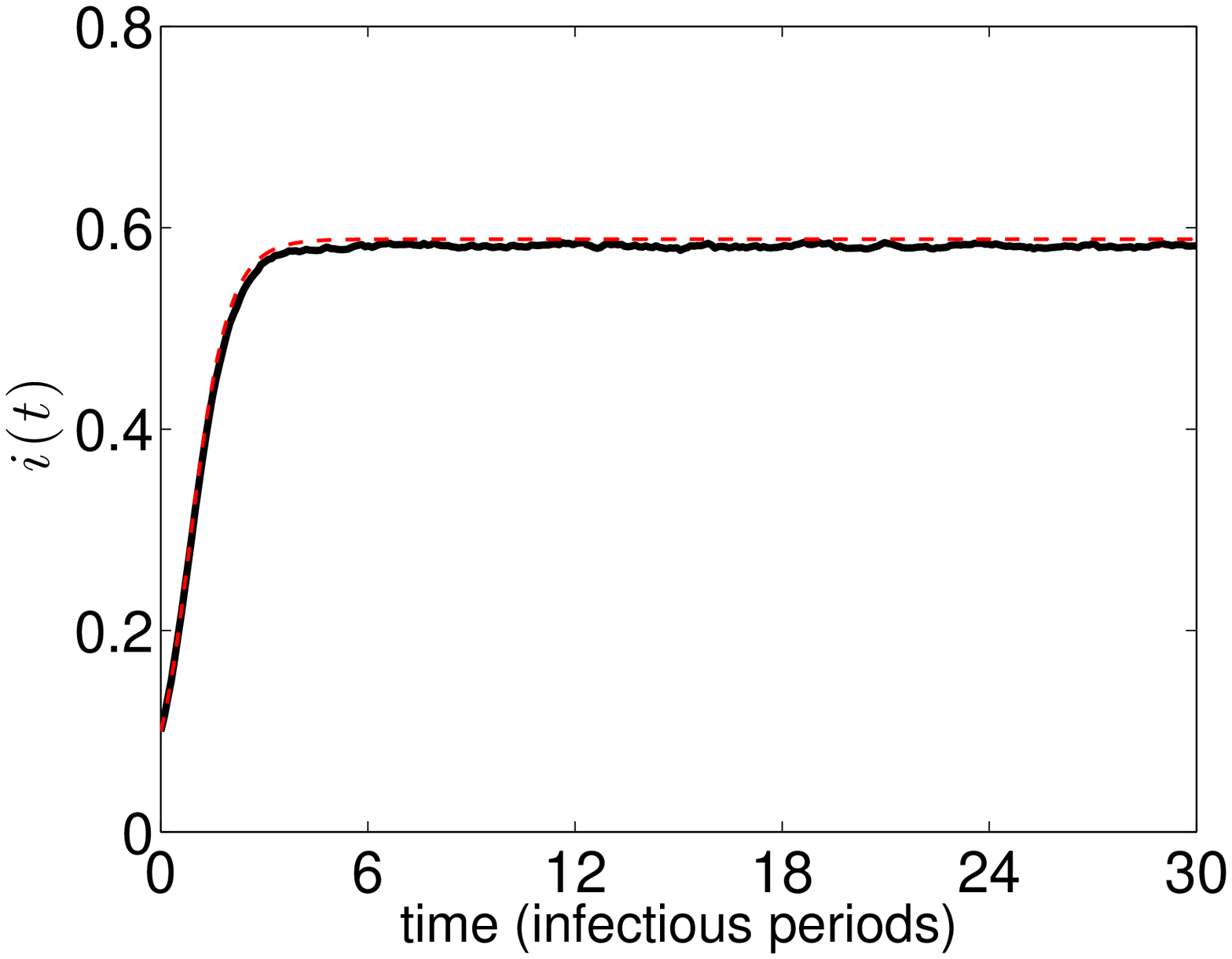}
&
	\includegraphics[scale=0.35]{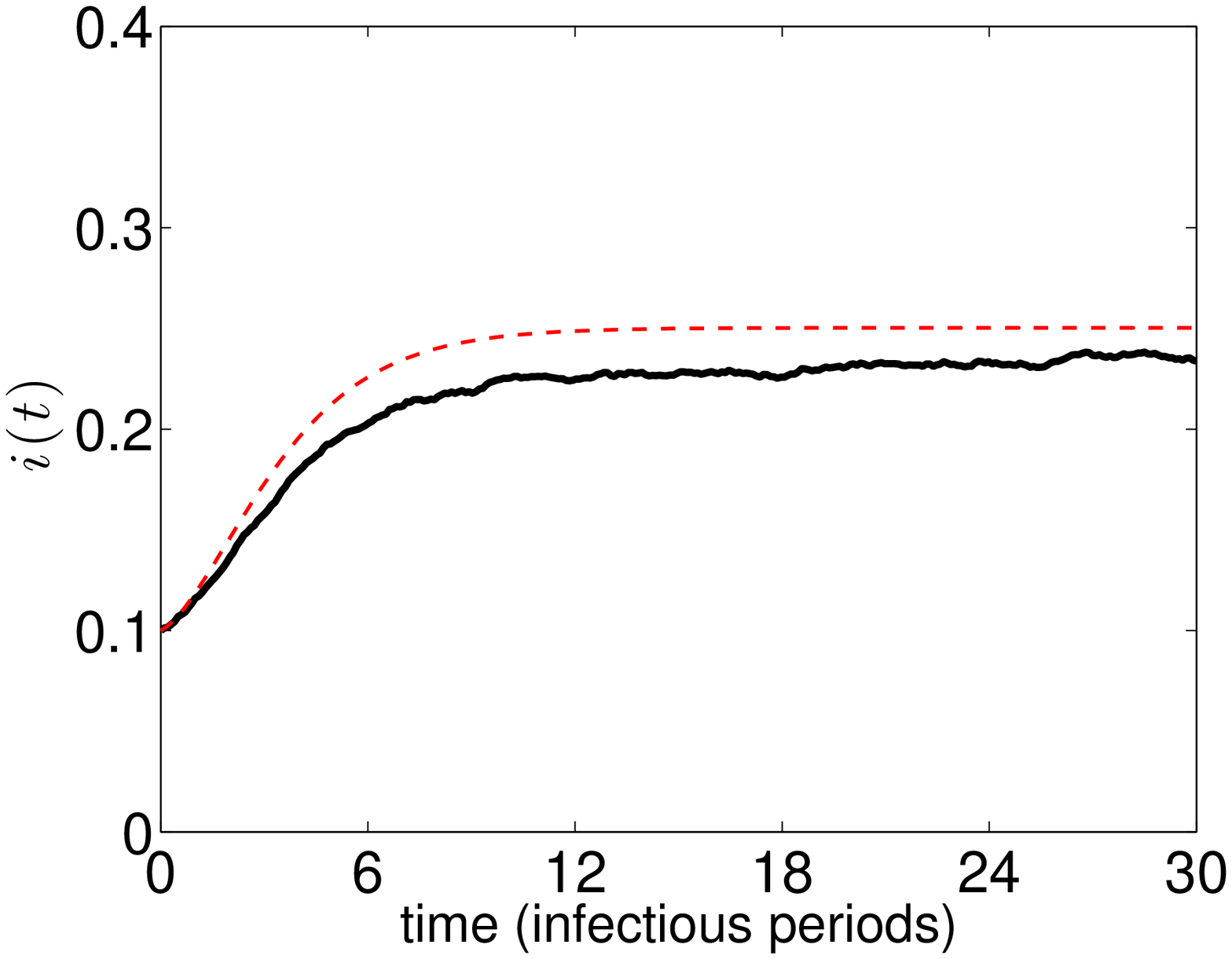}
\\
\hspace{-1cm}
	\includegraphics[scale=0.35]{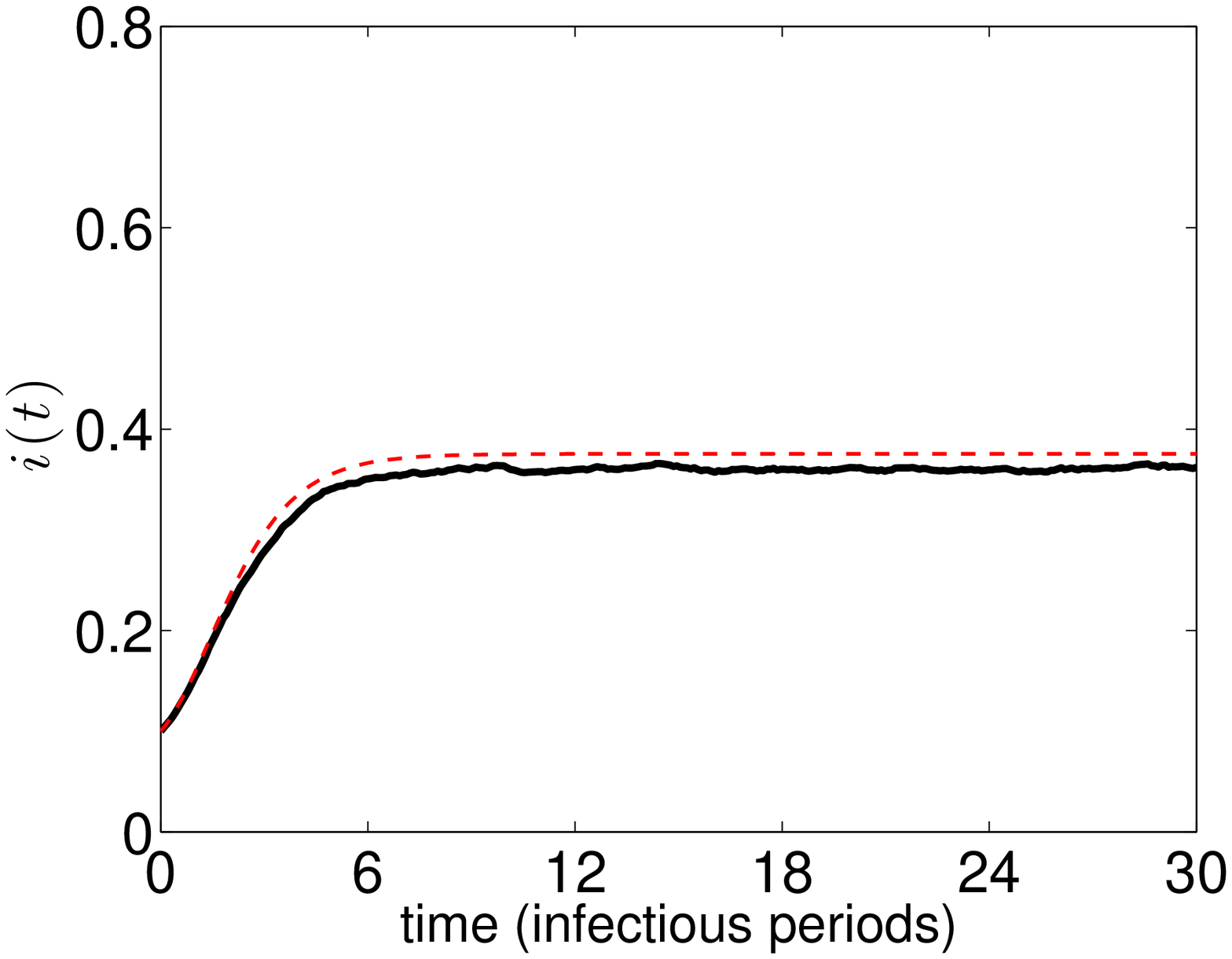}
&
	\includegraphics[scale=0.35]{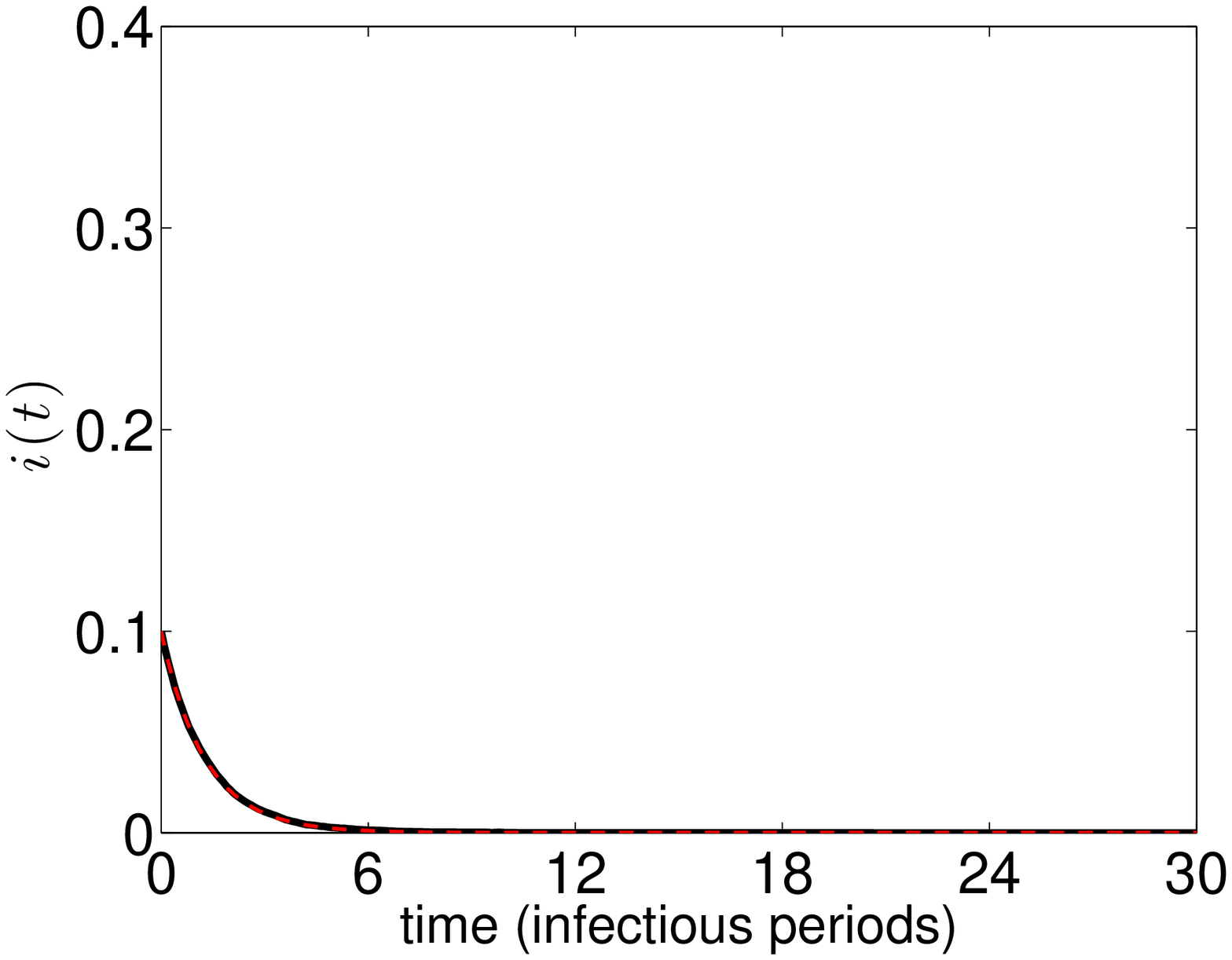}
\end{tabular}
\caption{Fraction of infectious nodes averaged over 10 runs of stochastic simulations carried out on two-layered exponential random networks of size $N=10000$ for $\alpha=$ 0.1 (left panels) and $\alpha=$ 0.6 (right panels). Top panels: $\langle k_A \rangle =30$ and $\langle k_B \rangle=20$. Bottom panels: $\langle k_A \rangle =20$ and $\langle k_B \rangle=30$. Dashed line shows the prevalence $\left( \sum_k i_k \right)$ predicted by the SIS model \eqref{SIS-HMFM2}. Initial fraction of infected nodes: 10$\%$. Parameters: $\mu=1$, $\beta=0.1$, $\beta_c=0.005$.
\label{Exp-Exp}}
\end{figure}

Next, we compare network layers with the same expected number of links (same mean degrees) but different network topologies. The aim is to see how a non-uniform overlap makes the epidemic dynamics depart from the model predictions. Keeping the same heterogeneous degree distribution in layer A, we vary the degree heterogeneity in layer B by considering Poisson, exponential and power-law degree distributions. To make the differences more noticeable, we take a parameter combination leading to epidemic extinction according to model \eqref{SIS-HMFM2}. As Figure~\ref{SF-Others} shows, when the variance of degrees in layer B is low (top panel) the nodes with the highest degrees in layer A have a much lower fraction of common links than those with low degrees once the CR algorithm has been applied. This means the violation the hypothesis of a uniform overlap between layers, and allows a higher transmission of the infection which leads to a (low) prevalence of the disease, instead of the epidemic die-out predicted by the model. As the variance of degrees in layer B increases (middle and bottom panels), the disagreement between simulations and the model prediction decreases. In fact, in the bottom panel the epidemic extinction is also observed in the simulations because layer B has a degree sequence generated from the same power-law distribution as the one used to generate the degree sequence of layer A and, hence, a higher uniformity in the overlap is achieved.

\begin{figure}[ht!]
\begin{tabular}{l}
	\includegraphics[scale=0.35]{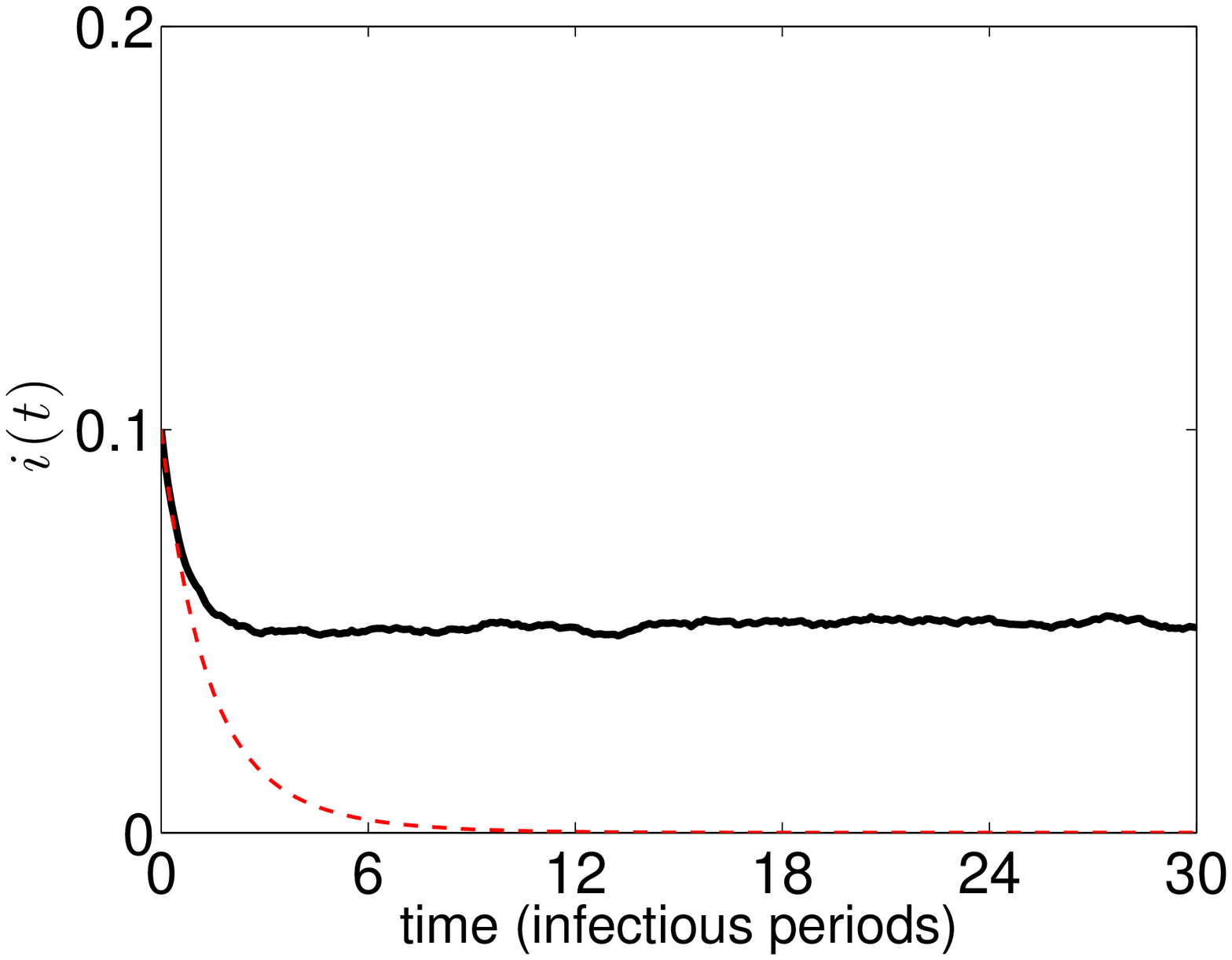}
\\
	\includegraphics[scale=0.35]{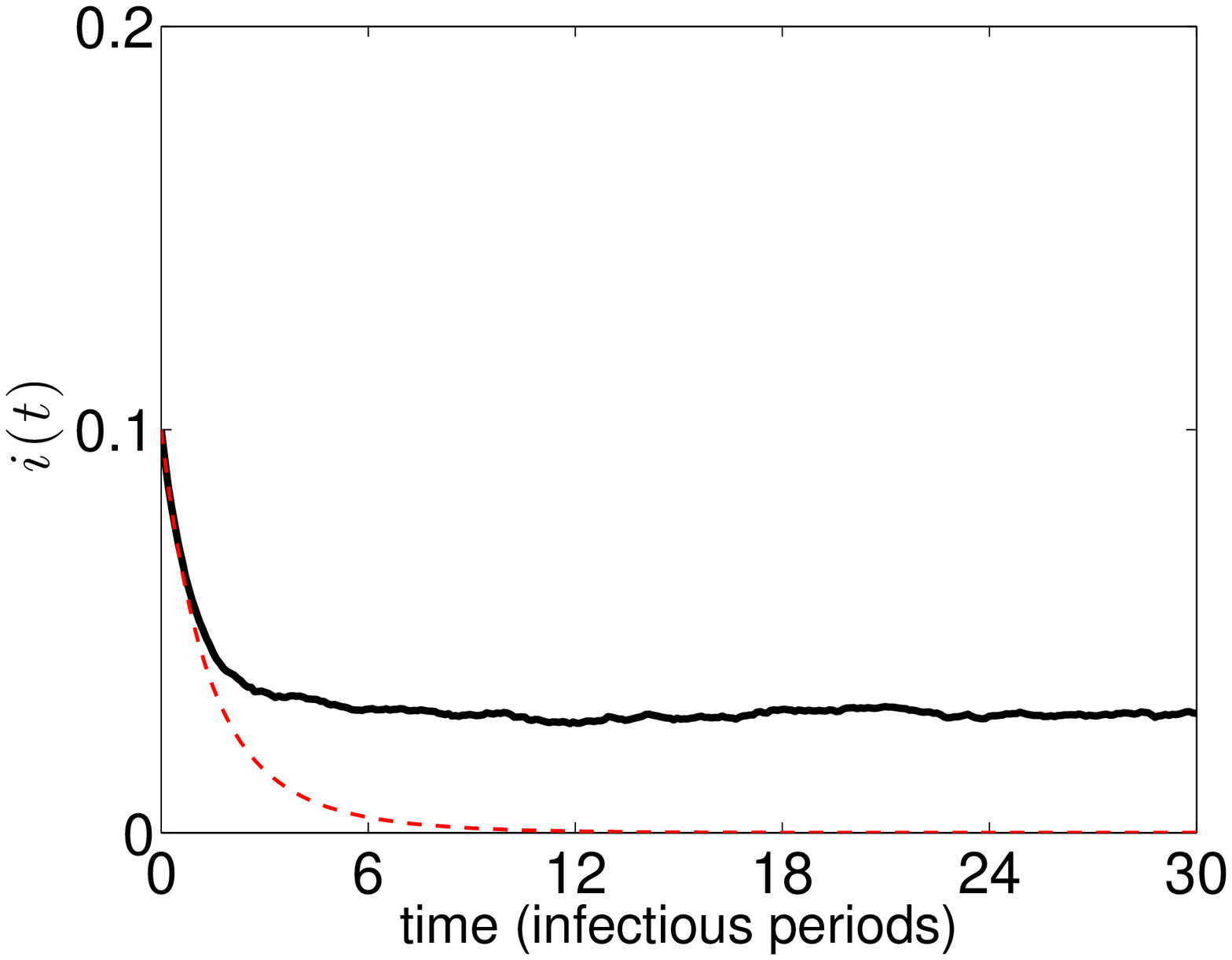}
\\
	\includegraphics[scale=0.35]{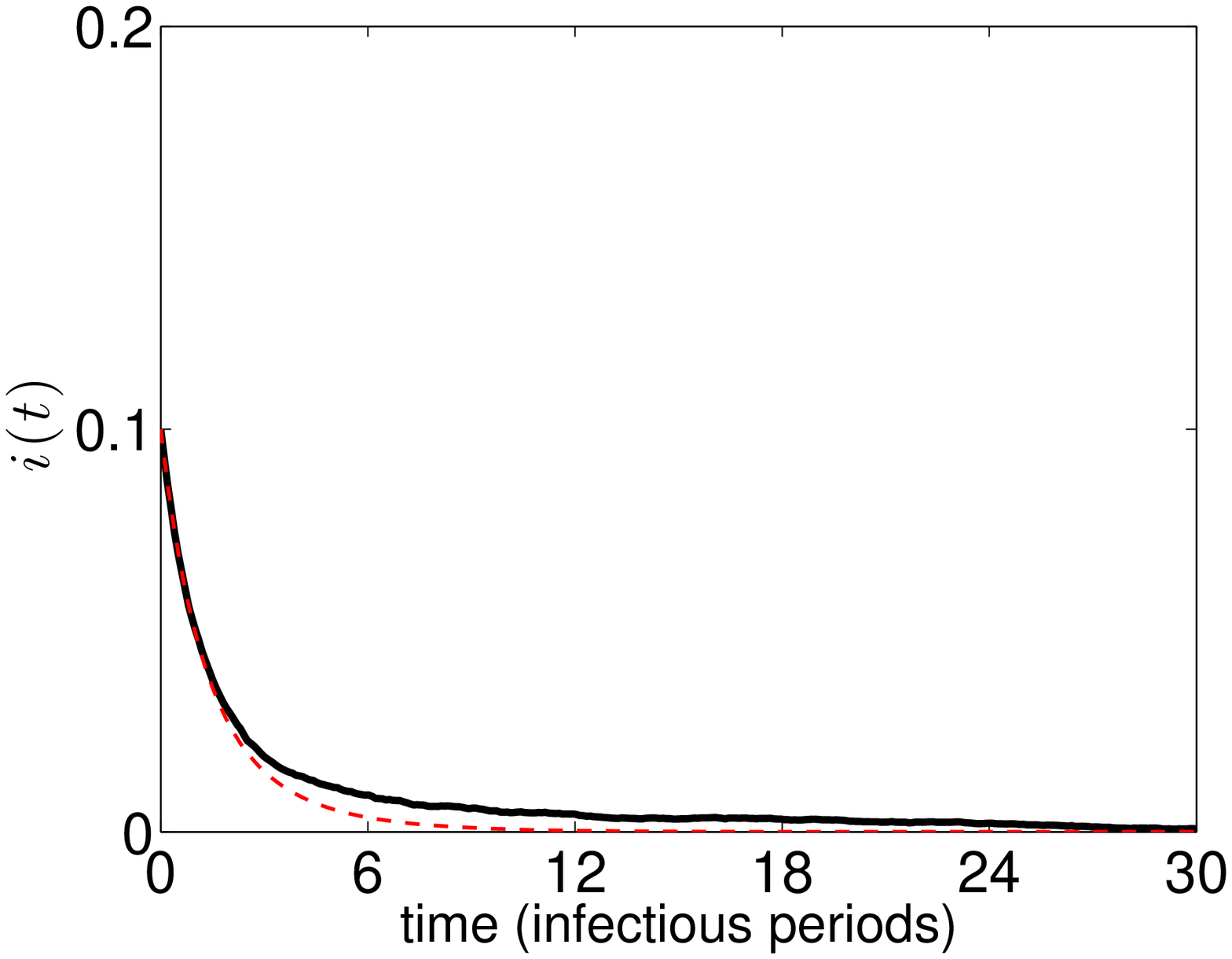}
\end{tabular}
\caption{Fraction of infectious nodes averaged over 10 runs of stochastic simulations carried out on a two-layered network of size $N=10000$
for $\alpha=$ 0.6. In all cases layer A has a power-law degree distribution ($p(k) \sim k^{-3}$) with $k_{\min}=5$ ($\langle k_A \rangle =10$), whereas layer B has Poisson (top), exponential (middle) and power-law (bottom) degree distribution with the same mean degree ($\langle k_B \rangle =10$). Dashed line shows the prevalence $\left( \sum_k i_k \right)$ predicted by the SIS model \eqref{SIS-HMFM2}. Initial fraction of infected nodes: 10$\%$. Parameters: $\mu=1$, $\beta=0.1$, and $\beta_c=0.005$.
\label{SF-Others}}
\end{figure}

\section{Discussion}
We have proposed a cross-rewiring algorithm to create and control the overlap between two networks with prescribed degree sets. The wider range of permitted overlap coefficients, from 0 to values very close to the theoretical upper bound given by Theorem~\ref{ordenades}, is obtained by cross-rewiring networks whose nodes have been labelled according to their rank in the ordered degree sequences, suggesting that the overlap coefficient and the inter-layer degree-degree correlation can be quite independent from each other. This algorithm allows to check the predictions of a mean-field SIS model with awareness dissemination in a host population, where the routes of propagation for the infectious agent and awareness are embedded into a two-layer network.

A key ingredient of the model is the probability $p_{B|A}$ that a randomly chosen link of layer $A$ connects two nodes that are also connected in layer $B$, i.e., the probability that an  $A$-link is a common link. Its expression, given by \eqref{pB|A}, shows that, as one could expect, it increases with the overlap between the layers but, moreover, it is also a linear increasing function of the ratio $\langle k_B \rangle / \langle k_A \rangle$, which measures the difference in the number of links of each layer, $L_A$ and $L_B$. In particular, if $\langle k_A \rangle > \langle k_B \rangle$, Lemma~\ref{uppmax} says that $\alpha < \langle k_B \rangle / \langle k_A \rangle = L_B / L_A$, and, so, $p_{B|A} < L_B / L_A$. This inequality simply reflects the fact that layer $A$ cannot be embedded in layer $B$ (since $p_{B|A} < 1$). Conversely, if $\langle k_A \rangle \le \langle k_B \rangle$, then $\alpha \le \langle k_A \rangle / \langle k_B \rangle$, and \eqref{pB|A} implies that $p_{B|A} \le 1$. This result agrees with what one would expect from the definition of $p_{B|A}$ because, when all the $A$-links are common links, $p_{B|A} = 1$ even for $\alpha < 1$, but such an embedding is only possible when $L_A \le L_B$. The effect of these asymmetric roles played by each mean degree on the epidemic progression is illustrated in Fig.~\ref{Exp-Exp} for networks with the same type of degree distribution but different mean degrees. Clearly, in this example the epidemic spread is only contained when the mean degree of the layer $B$, over which awareness dissemination occurs, is higher than that of layer $A$ and, moreover, the overlap between layers is high enough (right bottom panel).

A basic assumption in the derivation of the model is the uniform distribution of the overlap over the set of nodes. This means that those nodes with high degrees in layer $A$ have the same fraction of overlapped links that those with lower degrees. Of course, this will not be the case when there is a large asymmetry between the degree distributions of each layer. One can observe the differences when layer $A$, the one over which physical contacts occur, has a power-law degree distribution whereas dissemination layer $B$ has a Poisson degree distribution. When both degree distributions have the same mean degree, those nodes with the highest degrees in layer $A$ only have a small fraction of overlapped links because of the low variance of the Poisson distribution. This amounts to an underestimation of the epidemic prevalence by the mean-field SIS model \eqref{SIS-HMFM2} since those nodes acting as a superspreaders in layer A have proportionally much less contacts with a low transmission rate (see top panel in Fig.~\ref{SF-Others}). In contrast, by increasing the variance of the degree distribution of layer $B$, disease transmission is reduced and the epidemic evolution is closer to the one predicted by the the mean-field model (see bottom panel in Fig.~\ref{SF-Others} where layer $B$ has the same power-law degree distribution as layer $A$).

In general, when the mean-field assumptions are met, stochastic simulations confirm that the proposed SIS model is suitable for modelling two interacting contagious processes like epidemic spreading and awareness dissemination. In particular, due to the nature of their interaction, the model predicts a decreasing relationship between $R_0$ and the overlap coefficient $\alpha$ (see Fig.~\ref{R0-SIS}). Moreover, although the analytical prediction of the mean-field model is not accurate close to the epidemic threshold $R_0(\alpha)=1$, the behaviour of the prevalence with network overlap shows a good agreement with stochastic simulations when the overlap coefficient is not so close to its critical value. With this respect, it would be interesting to consider which relationships between overlap and epidemic thresholds follow for more general epidemic network models as those considered in \cite{KJS,Sahneh14,SSVM}, which are based on the adjacency matrix of each network layer and allow for both degree-degree correlations within and between layers, and a non-uniform overlap between layers.

Finally, note that a similar mean-field approach for modelling epidemic spreading on single heterogeneous networks was adopted in \cite{Boguna,PS-V} using, as state variable, the fraction $\rho_k$ of nodes of degree $k$ that are infectious. The connection between this approach and the one traditionally used in epidemiology is given by the relationship between the state variables, namely, $i_k = I_k/N = I_k/N_k \cdot N_k/N = \rho_k p(k)$ (see \cite{May-Lloyd}). These works were more focussed on aspects of network topology and, in particular, the absence of epidemic threshold was proved in \cite{Boguna} for scale-free networks with degree-degree correlations, i.e., for networks with a mixing pattern such that $P(k'|k) \ne k'p(k')/\chull{k}$. Such a network-oriented approach offers an alternative way for analysing the impact of overlap on epidemic spreading on two-layer networks with non-proportionate mixing within each layer (see \cite{Saumell} for an extension of this formalism to interconnected networks). With this respect, the cross-rewiring algorithm used to generate overlapped networks with arbitrary degree distributions can be adapted to control the intra-layer degree-degree correlation during the process. This network attribute, however, will restrict the value of the maximum attainable overlap coefficient because it reduces the number of "good pairs" as long as correlations within each layer are preserved. Indeed, the dependence between correlations and the maximum attainable overlap constitutes an interesting topic for future work.

\end{document}